\newtheorem{lemma}{Lemma}
\newtheorem{remark}{Remark}
\newtheorem{claim}{Claim}
\begin{abstract}
Implementation theory has made significant advances in characterizing which social choice functions can be implemented in Nash equilibrium, but these results typically assume sophisticated strategic reasoning by agents. However, evidence exists to show that agents frequently cannot perform such reasoning. In this paper, we present a finite mechanism which fully implements Maskin-monotonic social choice functions as the outcome of the unique correlated equilibrium of the induced game. Due to the results in \cite{HartMasColell2000}, this yields that even when agents use a simple adaptive heuristic like regret minimization rather than computing equilibrium strategies, the designer can expect to implement the SCF correctly. We demonstrate the mechanism's effectiveness through simulations in a bilateral trade environment, where agents using regret matching converge to the desired outcomes despite having no knowledge of others' preferences or the equilibrium structure. The mechanism does not use integer games or modulo games.
\end{abstract}
\begin{document}

\title{Correlated equilibrium implementation: Navigating toward social optima with learning dynamics}
\author{Soumen Banerjee and Yi-Chun Chen and Yifei Sun}

\begin{titlepage}

\maketitle

\vspace{1cm}
\setcounter{tocdepth}{2} 
\tableofcontents

\end{titlepage}


\section{Introduction}

Implementation theory has made significant strides in characterizing
implementable social choice functions. \cite{maskin99} establishes that any
social choice function satisfying monotonicity can be implemented in Nash
equilibrium. However, this and many subsequent results rely on agents being
able to compute and play Nash equilibria (and indirectly, best responses) -
an assumption that has faced increasing scrutiny from both theoretical and
empirical perspectives (see for instance \cite{GoereeLouis2021}, \cite%
{Li2017}).

This has led to a growing interest in implementation under more realistic
behavioral assumptions. A particularly promising direction comes from the
demonstration in \cite{HartMasColell2000} that simple regret minimization
learning leads to correlated equilibria rather than Nash equilibria.%
\footnote{%
The Pluribus robot designed by researchers at Carnegie Mellon university,
which beat the best professional human players at no-limit Texas hold 'em
poker also uses a variation of regret minimization, suggesting that the
algorithm is not just simple, but also capable.} In addition, other adaptive
procedures such as calibrated learning (due to \cite{foster1997calibrated})
and universal conditional consistency (due to \cite{fudenberg1999conditional}%
) have also been shown to approach correlated equilibria.\footnote{\cite%
{stoltz2007learning} demonstrate that regret minimization yields correlated
equilibria when the strategy sets are convex and compact.} These results
suggests that correlated equilibrium may be a more natural solution concept
for implementation problems, as it emerges from straightforward adaptive
behavior rather than requiring sophisticated strategic reasoning.\footnote{%
\cite{halpern2016minimizing} and \cite{srivastava2019decision} provide
empirical evidence in support of the claim that decision makers perform
regret minimization.}

While we know which social choice functions can be implemented in Nash
equilibrium, we lack a characterization of SCFs which can be implemented in
correlated equilibria and implementing mechanisms that explicitly target
implementation in correlated equilibria. Such mechanisms would be especially
valuable as they could achieve desired social outcomes under simple learning
dynamics rather than requiring agents to compute equilibria.

This paper provides a step forward in resolving this tension between
implementation theory and behavioral realism. We show that
Maskin-monotonicity characterizes SCFs which are implementable in correlated
equilibrium and provide a mechanism which can fully implement such SCFs as
the outcome obtained in a unique correlated equilibrium. While there is a
voluminous literature on full implementation, our paper offers the first
mechanism that implements in adaptive heuristics per \cite{HartMasColell2000}
. Previous papers such as \cite{cabrales1999adaptive} appeal to both integer
games and adaptive dynamics where every best response must be played with
positive probability, while \cite{cabrales2011implementation} requires that
the agents switch with positive probability to any better or best response.
In addition, we demonstrate in a simulation that the mechanism achieves the
desirable social choice outcome in a canonical bilateral trade environment
when the buyer and seller play the induced game using regret minimization
heuristics. Finally, the mechanism does not rely on integer or modulo games,
making it a reasonable candidate for practical deployment.\footnote{%
See \cite{Jackson92} and \cite{Moore92} for a critique of these tail-chasing
devices.}

\section{Mechanism and Proof of Implementation}

\subsection{Environment}

Consider a finite set of agents $\mathcal{I}=\{1,2,...,I\}$ with $I\geq 2$;
a finite set of possible states $\Theta $; and a set of pure alternatives $A$%
. We consider an environment with lotteries and transfers. Specifically, we
work with the space of allocations/outcomes $X\equiv \Delta \left( A\right)
\times 
\mathbb{R}
^{I}$ where $\Delta (A)$ denotes the set of lotteries on $A$ that have a
countable support, and $%
\mathbb{R}
^{I}$ denotes the set of transfers to the agents.

Each state $\theta \in \Theta $ induces a type $\theta _{i}\in \Theta _{i}$
for each agent $i\in \mathcal{I}$. Assume that $\Theta $ has no redundancy,
i.e., whenever $\theta \neq \theta ^{\prime }$, we must have $\theta
_{i}\neq \theta _{i}^{\prime }$ for some agent $i$. Hence, we can identify a
state $\theta $ with its induced type profile $\left( \theta _{i}\right)
_{i\in \mathcal{I}}$ and $\Theta $ with a subset of $\times _{i=1}^{I}\Theta
_{i}$. Moreover, we say that a type profile $\left( \theta _{i}\right)
_{i\in \mathcal{I}}$ identifies a state $\theta ^{\prime }$ if $\theta
_{i}=\theta _{i}^{\prime }$ for every $i\in \mathcal{I}$. Each type $\theta
_{i}\in \Theta _{i}$ induces a utility function $u_{i}\left( \cdot ,\theta
_{i}\right) :X\rightarrow \mathbb{R}$ which is quasilinear in transfers and
has a bounded expected utility representation on $\Delta \left( A\right) $.
That is, for each $x=\left( \ell ,\left( t_{i}\right) _{i\in \mathcal{I}%
}\right) \in X$, we have $u_{i}\left( x,\theta _{i}\right) =v_{i}(\ell
,\theta _{i})+t_{i}$ for some bounded expected utility function $v_{i}(\cdot
,\theta _{i})$ over $\Delta \left( A\right) $. That is, we work with an
environment with transferable utility (TU) restriction on agents'
preferences which is absent in \cite{maskin99}. As in \cite{AM92}, we will
take for granted that distinct elements of $\Theta _{i}$ induce different
preference orderings over $\Delta \left( A\right) ,$ and also that a player
is never indifferent over all elements of $A.$

We focus on a \textit{complete information} environment in which the state $%
\theta $ is common knowledge among the agents but unknown to a mechanism
designer. Thanks to the complete-information assumption, it is indeed
without loss of generality to assume that agents' values are private.

The designer's objective is specified by a \textit{social choice function} $%
f:\Theta \rightarrow X$, namely, if the state is $\theta $, the designer
would like to implement the social outcome $f\left( \theta \right) $. We
allow an SCF to be defined as a mapping from $\Theta $ to $X$ only so as to
keep its consistency with the range of the outcome function used in the
implementing mechanism. We can define $f:\Theta \rightarrow \Delta \left(
A\right) $ as a special case of SCFs, as long as the designer is still
allowed to impose off-the-equilibrium transfers in the implementing
mechanism.

\subsection{Implementation}

We denote a mechanism by $\mathcal{M}=\left( (M_{i},\tau _{i}\right) _{i\in 
\mathcal{I}},g)$ where $M_{i}$ is a nonempty \textit{set of messages}
available to agent $i$; $g:M\rightarrow X$ (where $M\equiv \times
_{i=1}^{I}M_{i}$) is the \textit{outcome function}; and $\tau
_{i}:M\rightarrow \mathbb{R}$ is the \textit{transfer rule} which specifies
the payment to agent $i$. At each state $\theta \in \Theta $, the
environment and the mechanism together constitute a \textit{game with
complete information} which we denote by $\Gamma (\mathcal{M},\theta )$.

Let $\sigma \in \Delta (M)$ be a probability distribution over $M$. A
strategy profile $\sigma $ is said to be a correlated \textit{equilibrium}
of the game $\Gamma (\mathcal{M},\theta )$ if, for all agents $i\in \mathcal{%
I}$ and all messages $m_{i}$ such that $(m_{i},m_{-i})\in $supp$\left(
\sigma \right) $ and $m_{i}^{\prime }\in M_{i}$, we have 
\begin{eqnarray*}
&&\sum_{m_{-i}\in M_{-i}}\sigma _{-i}(m_{-i}|m_{i})\left[ \tilde{u}%
_{i}(g(m_{i},m_{-i}),\theta )+\tau _{i}(m_{i},m_{-i})\right] \\
&\geq &\sum_{m_{-i}\in M_{-i}}\sigma _{-i}(m_{-i}|m_{i})\left[ 
\tilde{u}_{i}(g(m_{i}^{\prime },m_{-i}),\theta )+\tau _{i}(m_{i}^{\prime
},m_{-i})\right] \text{,}
\end{eqnarray*}%
where we use $\sigma _{-i}(m_{-i}|m_{i})$ to denote the marginal probability
on $m_{-i}$ conditional on $m_{i}$. For simplicity, we abuse notations to
write $\sigma _{j}(m_{j}|m_{i})$ for the marginal probability on $m_{j}$
conditional on $m_{i}$. \newline
Note that it is equivalent to have the following definition, 
\begin{eqnarray*}
&&\sum_{m_{-i}\in M_{-i}}\sigma (m_{i},m_{-i})\left[ \tilde{u}%
_{i}(g(m_{i},m_{-i}),\theta )+\tau _{i}(m_{i},m_{-i})\right] \\
&\geq &\sum_{m_{-i}\in M_{-i}}\sigma (m_{i},m_{-i})\left[ \tilde{u}%
_{i}(g(m_{i}^{\prime },m_{-i}),\theta )+\tau _{i}(m_{i}^{\prime },m_{-i})%
\right] \text{.}
\end{eqnarray*}%
Let $CE(\Gamma (\mathcal{M},\theta ))$ denote the set of correlated
equilibria of the game $\Gamma (\mathcal{M},\theta )$. We also denote by $%
\mbox{supp}\ (CE(\Gamma (\mathcal{M},\theta )))\ $the set of message
profiles that can be played with positive probability under some correlated
equilibrium $\sigma \in CE(\Gamma (\mathcal{M},\theta )$, i.e., 
\begin{equation*}
\mbox{supp}\ (CE(\Gamma (\mathcal{M},\theta )))=\{m\in M:%
\mbox{there exists
$\sigma \in CE(\Gamma(\mathcal{M}, \theta))$ such that $\sigma(m) > 0$}\}. 
\end{equation*}%
We now define our notion of implementation.

\begin{definition}
\label{def-nash}An SCF $f$ is \textbf{implementable in correlated equilibria}
if there exists a mechanism $\mathcal{M}=\left( (M_{i},\tau _{i}\right)
_{i\in \mathcal{I}},g)$ such that for every state $\theta \in \Theta $, $%
m\in \mbox{supp}\ (CE(\Gamma (\mathcal{M},\theta )))\Rightarrow g\left(
m\right) =f\left( \theta \right) $ and $\tau _{i}\left( m\right) =0$ for
every $i\in \mathcal{I}$.
\end{definition}

\subsection{Maskin Monotonicity}

\label{sec:mono}

For $\left( x,\theta _{i}\right) \in X\times \Theta _{i}$, we use $\mathcal{L%
}_{i}\left( x,\theta _{i}\right) $ to denote the lower-contour set at
allocation $x$ in $X$ for type $\theta _{i}$, i.e.,%
\begin{equation*}
\mathcal{L}_{i}\left( x,\theta _{i}\right) =\left\{ x^{\prime }\in
X:u_{i}\left( x,\theta _{i}\right) \geq u_{i}(x^{\prime },\theta
_{i})\right\} . 
\end{equation*}%
We use $\mathcal{SU}_{i}\left( x,\theta _{i}\right) $ to denote the strict
upper-contour set of $x\in X$ for type $\theta _{i}$, i.e.,%
\begin{equation*}
\mathcal{SU}_{i}\left( x,\theta _{i}\right) =\left\{ x^{\prime }\in
X:u_{i}(x^{\prime },\theta _{i})>u_{i}(x,\theta _{i})\right\} \text{.} 
\end{equation*}%
We now state the definition of Maskin monotonicity which \cite{maskin99}
proposes for Nash implementation.

\begin{definition}
\label{mm}An SCF $f$ satisfies \textbf{Maskin monotonicity} if, for every
pair of states $\tilde{\theta}$ and $\theta $ with $f(\tilde{\theta}%
)\not=f\left( \theta \right) $, there is some agent $i\in \mathcal{I}$ such
that 
\begin{equation}
\mathcal{L}_{i}(f(\tilde{\theta}),\tilde{\theta}_{i})\cap \mathcal{SU}_{i}(f(%
\tilde{\theta}),\theta _{i})\not=\varnothing \text{.}  \label{maskin-m}
\end{equation}
\end{definition}

\noindent The agent $i$ in Definition \ref{mm} is called a \textquotedblleft
whistle-blower" or a \textquotedblleft test agent"; likewise, an allocation
in $\mathcal{L}_{i}(f(\tilde{\theta}),\tilde{\theta}_{i})\cap \mathcal{SU}%
_{i}(f(\tilde{\theta}),\theta _{i})$ is called a \textquotedblleft test
allocation\textquotedblright\ for agent $i$ and the ordered pair of states $(%
\tilde{\theta},\theta )$.

\subsection{\label{M}The Mechanism}

In this section, we present our main result which shows that Maskin
monotonicity is necessary and sufficient for implementation in correlated
equilibrium. We formally state the result as follows:

\begin{theorem}
\label{CE}An SCF $f$ is implementable in \textbf{correlated equilibria} if
and only if it satisfies Maskin monotonicity.
\end{theorem}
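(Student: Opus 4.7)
The plan is to prove both directions; necessity follows the Maskin template adapted to CE, while sufficiency requires exhibiting an explicit finite mechanism whose only correlated equilibria implement $f(\theta)$ at each state $\theta$. For necessity, fix any implementing mechanism $\mathcal M$ and any state $\tilde\theta$, and pick a CE $\sigma$ of $\Gamma(\mathcal M,\tilde\theta)$ (existence is automatic in a finite game). By the implementation property, every $\tilde m\in\mbox{supp}(\sigma)$ satisfies $g(\tilde m)=f(\tilde\theta)$ with zero transfers. Now fix any $\theta$ with $f(\theta)\neq f(\tilde\theta)$. The same $\sigma$ cannot be a CE at $\theta$ (otherwise its support would be forced to yield $f(\theta)$), so some agent $i$ and some $\tilde m_i\in\mbox{supp}_i(\sigma)$ admit a strictly improving deviation $m_i'$ at $\theta$. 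Combining the CE inequality at $\tilde\theta$ with the strict reverse at $\theta$, both evaluated under $\sigma_{-i}(\cdot\mid\tilde m_i)$, and using expected-utility linearity together with the TU structure, the test allocation
$y=\sum_{m_{-i}}\sigma_{-i}(m_{-i}\mid\tilde m_i)\bigl(g(m_i',m_{-i})+\tau_i(m_i',m_{-i})\,e_i\bigr)\in X$
lies in $\mathcal L_i(f(\tilde\theta),\tilde\theta_i)\cap\mathcal{SU}_i(f(\tilde\theta),\theta_i)$, establishing Maskin monotonicity.

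For sufficiency, I would build a finite mechanism in which each agent reports a tuple consisting of (i) a putative state, (ii) a candidate test allocation, and (iii) the index of an agent she wishes to challenge. The outcome rule has three regimes. Unanimous agreement on some $\theta$ with no challenge delivers $f(\theta)$ with zero transfers, which is the target honest equilibrium. A unique dissenter $i$ triggers her proposed test allocation, but only if it lies in the lower-contour set of the alleged $f$ at the alleged state; this lower-contour constraint is enforced by a small TU bonus/penalty that leverages the quasi-linearity of preferences. Disagreement among two or more dissenters is resolved by a rule that gives each dissenter a strict best reply depending only on her own announcement, sidestepping the need for integer games. Small bonuses reward a successful challenge and penalise an unjustified one, and are calibrated to be small enough not to perturb the honest equilibrium.

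The main obstacle, and the reason the usual integer/modulo escape hatch can be dropped, is ruling out non-truthful CEs on a finite message space. A CE must satisfy an incentive inequality under \emph{every} conditional distribution induced by $\sigma$, not merely under product marginals; my plan is to show that from any conjectural non-truthful CE one can exhibit a pure deviation for some agent whose payoff strictly dominates her equilibrium payoff conditional on each signal she receives. The delicate step is arranging the off-equilibrium outcome and transfer rules so that these strict-dominance inequalities hold pointwise in $m_{-i}$, not only in expectation, since this pointwise strength is what guarantees robustness across all conditional beliefs in a CE and thereby makes integer/modulo tie-breakers unnecessary. Verifying this uniform strict-dominance property is what I expect to be the central technical hurdle.
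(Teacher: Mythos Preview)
Your necessity argument is correct and in fact more explicit than the paper's, which simply notes that every Nash equilibrium is a correlated equilibrium and appeals to Maskin's original necessity proof. Your construction of the test allocation as the $\sigma_{-i}(\cdot\mid\tilde m_i)$-mixture of deviation outcomes (with the deviation transfer loaded onto agent $i$'s coordinate) is the right way to exploit quasilinearity and expected utility here.

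The sufficiency direction, however, has a genuine gap. Your mechanism follows the classical Maskin three-regime template (unanimous agreement / unique dissenter / multiple dissenters), and you correctly identify that the crux for correlated equilibria is obtaining deviations that dominate \emph{pointwise} in $m_{-i}$, so that they remain profitable under every conditional belief a CE may induce. But you do not actually construct such a rule. In particular, your handling of the multiple-dissenter regime---``a rule that gives each dissenter a strict best reply depending only on her own announcement''---is precisely the place where the integer or modulo game normally sits, and you offer no concrete substitute. Conceding that ``verifying this uniform strict-dominance property is the central technical hurdle'' without resolving it leaves the proof unfinished; for a regime-based mechanism there is no reason to expect such pointwise dominance to hold.

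The paper's construction is structurally different and is worth knowing because it shows \emph{how} the CE conditioning is tamed. There are no regimes. Each agent reports two full type profiles $m_i^1,m_i^2\in\times_j\Theta_j$; the outcome for each ordered pair $(i,j)$ is a $(1-\varepsilon,\varepsilon)$ mixture of the best-challenge-scheme allocation $x(m_i^2,m_{j,j}^1)$ and Abreu--Matsushima dictator lotteries $y_k(m_{k,k}^1)$, the latter triggered only when a challenge occurs. Lemmas \ref{AM} and \ref{BC} together make truthful reporting of one's own type in $m_{j,j}^1$ weakly best against \emph{every} $m_{-j}$ and strictly best whenever the dictator lottery fires---this is the pointwise property you were hoping for, but it is engineered via the dictator-lottery/best-challenge pair rather than assumed. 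Three transfers then close the argument: $\tau^1$ fines an agent who is challenged; $\tau^2$ fines agent $i$ when $j$ self-challenges and $m_{i,j}^1\neq m_{j,j}^1$; and $\tau^3$ fines agent $i$ when $j$'s report $m_{j,i}^1$ of $i$'s type would itself challenge $m_i^2$. The key chain, which genuinely exploits CE conditioning rather than fighting it, is: if $j$ self-challenges, the dictator lottery fires with positive probability, forcing $m_{j,j}^1=\theta_j$; then $\tau^2$ forces every $i$ whose message is consistent with this event to set $m_{i,j}^1=\theta_j$; hence, conditional on $m_j$, every opponent reports $j$'s true type, and $\tau^3$ makes $j$'s self-challenging $m_j^2$ strictly worse than reporting the truth. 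This conditional-belief cascade is what replaces the integer game, and nothing in your regime-based sketch supplies an analogue.
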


\cite{maskin99} establishes that Maskin monotonicity is necessary for
implementation in Nash equilibria. Since each Nash equilibrium is also a
correlated equilibrium, Maskin monotonicity is also necessary for
implementation in correlated equilibria. To show that it is sufficient, we
will construct an implementing mechanism in the remainder of the section
having first established some preliminaries.

\subsubsection{Dictator Lotteries}

\begin{lemma}
\label{AM}For each agent $i\in \mathcal{I}$, there exists a function $%
y_{i}:\Theta _{i}\rightarrow X$ such that for every types $\theta _{i}$ and $%
\theta _{i}^{\prime }$ with $\theta _{i}\neq \theta _{i}^{\prime }$, we have%
\begin{equation}
u_{i}\left( y_{i}\left( \theta _{i}\right) ,\theta _{i}\right) >u_{i}\left(
y_{i}\left( \theta _{i}^{\prime }\right) ,\theta _{i}\right) \text{.}
\label{d}
\end{equation}
\end{lemma}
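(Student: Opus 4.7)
The plan is to use transferable utility to reduce the claim to a feasibility question about lotteries alone, then construct those lotteries from the distinct-preferences hypothesis. Writing $y_i(\theta_i) = (\ell(\theta_i), t(\theta_i))$ (the transfer components for $j \neq i$ can be set to zero, as they do not enter $u_i$), quasilinearity reduces~(\ref{d}) to
\[
t(\theta_i) - t(\theta_i^{\prime}) > v_i(\ell(\theta_i^{\prime}), \theta_i) - v_i(\ell(\theta_i), \theta_i) \qquad \text{for every } \theta_i \neq \theta_i^{\prime}.
\]
By a Farkas/potential-function argument, this linear system in $\{t(\theta_i)\}$ is feasible if and only if the chosen lotteries satisfy \emph{strict cyclical monotonicity}: for every cycle of distinct types $\theta_i^{(1)}, \ldots, \theta_i^{(m)}$ with $\theta_i^{(m+1)} = \theta_i^{(1)}$,
\[
\sum_{r=1}^{m} \bigl[ v_i(\ell(\theta_i^{(r+1)}), \theta_i^{(r)}) - v_i(\ell(\theta_i^{(r)}), \theta_i^{(r)}) \bigr] < 0;
\]
equivalently, the utility matrix $V_{jk} = v_i(\ell(\theta_i^{(k)}), \theta_i^{(j)})$ must have the identity as its unique optimal permutation in the assignment problem $\max_{\pi} \sum_k V_{k, \pi(k)}$.

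To produce such lotteries I would exploit the distinct-preferences hypothesis, which ensures that the utility vectors $(v_i(a, \theta_i))_{a \in A}$ and $(v_i(a, \theta_i^{\prime}))_{a \in A}$ are not positive affine transforms of each other for $\theta_i \neq \theta_i^{\prime}$. This yields, for every pair of types, a direction in the tangent space to $\Delta(A)$ along which type $\theta_i$'s utility strictly outgrows type $\theta_i^{\prime}$'s. Starting from an arbitrary family of distinct initial lotteries, I would perturb each $\ell(\theta_i)$ along such a direction; when one can choose a single direction simultaneously favored by type $\theta_i$ over all other types (as available by Gordan-type arguments whenever the gradient vectors are in sufficiently general position), the resulting matrix $V$ has each diagonal entry strictly dominating every other entry in its own row, which is stronger than strict cyclical monotonicity and in fact already permits $t \equiv 0$. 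When a single ``universal'' direction is unavailable, one iterates the pairwise perturbations in some order with geometrically shrinking step sizes so that each earlier strict inequality is preserved.

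The main obstacle is that pairwise $2$-cycle separation does not, in general, force separation for longer cycles: standard counterexamples show that $2 \times 2$ supermodularity along the diagonal does not imply that the identity is the optimal assignment when $|\Theta_i| \geq 3$, so coordinating the perturbations is the technical heart of the argument. Once the lotteries are fixed, the transfers $\{t(\theta_i)\}$ are recovered by the standard potential construction on the weighted directed graph of type pairs (e.g.\ via the longest-path distance in the difference graph), and the other-agent transfer components are set to zero, completing the definition of $y_i$.
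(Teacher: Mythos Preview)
The paper does not prove this lemma; it simply invokes Abreu and Matsushima (1992), who establish that such $y_i(\cdot)$ exist already as lotteries in $\Delta(A)$ with no transfer component at all. Your detour through strict cyclical monotonicity and potential-function transfers is therefore more machinery than the problem requires. Under the paper's standing assumptions---distinct types induce distinct vNM preferences over $\Delta(A)$, and no type is indifferent across all of $A$---one obtains row dominance of your matrix $V$ directly. Let $\bar\ell$ be the uniform lottery on $A$, set $\hat v(\theta_i)=v_i(\cdot,\theta_i)-\tfrac{1}{|A|}\sum_{a\in A} v_i(a,\theta_i)\,\mathbf 1$, and take $\ell(\theta_i)=\bar\ell+\varepsilon\,\hat v(\theta_i)/\lVert\hat v(\theta_i)\rVert$ for small $\varepsilon>0$. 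Then for $\theta_i\neq\theta_i'$,
\[
v_i\bigl(\ell(\theta_i),\theta_i\bigr)-v_i\bigl(\ell(\theta_i'),\theta_i\bigr)
=\varepsilon\,\lVert\hat v(\theta_i)\rVert\Bigl(1-\cos\angle\bigl(\hat v(\theta_i),\hat v(\theta_i')\bigr)\Bigr)>0,
\]
the strict inequality holding because distinct preferences force the normalized gradient vectors to be distinct unit vectors (equality would make $v_i(\cdot,\theta_i)$ a positive affine transform of $v_i(\cdot,\theta_i')$). This already yields~(\ref{d}) with $t\equiv 0$, and the obstacle you flag about $2$-cycles versus longer cycles never arises.

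There is also a slip in your perturbation paragraph. A direction $d$ along which type $\theta_i$'s utility ``outgrows'' type $\theta_i'$'s is one with $\langle\hat v(\theta_i)-\hat v(\theta_i'),d\rangle>0$; perturbing $\ell(\theta_i)$ along such a $d$ enlarges the \emph{column} gap $V_{ii}-V_{i'i}$ of your matrix, not the row gap $V_{ii}-V_{ii'}$ that you actually need for diagonal row dominance. Even if your Gordan-type condition held for every type simultaneously, it would not deliver the inequality you claim, and in any case that condition can genuinely fail (take $\hat v$'s of very different magnitudes at small angles), so the ``general position'' caveat cannot be dropped. The normalized-gradient choice above targets the correct (row) condition and resolves it unconditionally, which is why the AM92 route needs neither transfers nor an iterative perturbation scheme.
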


\cite{AM92} prove the existence of lotteries $\left\{ y_{i}\left( \cdot
\right) \right\} \subset \Delta \left( A\right) $ which satisfy Condition (%
\ref{d}). 
We now define a notion called \emph{the best challenge scheme}, which plays
a crucial role in proving Theorem \ref{CE}. First, a \emph{challenge scheme}
for an SCF $f$ is a collection of (pre-assigned) test allocations $\{x(%
\tilde{\theta},\theta _{i})\}$, one for each pair of state $\tilde{\theta}$
and type $\theta _{i}$ of agent $i$, such that 
\begin{eqnarray}
\text{ if }\mathcal{L}_{i}(f(\tilde{\theta}),\tilde{\theta}_{i})\cap 
\mathcal{SU}_{i}(f(\tilde{\theta}),\theta _{i}) &\neq &\varnothing \text{,
then }x(\tilde{\theta},\theta _{i})\in \mathcal{L}_{i}(f(\tilde{\theta}),%
\tilde{\theta}_{i})\cap \mathcal{SU}_{i}(f(\tilde{\theta}),\theta _{i})\text{%
;}  \label{reverse} \\
\text{if }\mathcal{L}_{i}(f(\tilde{\theta}),\tilde{\theta}_{i})\cap \mathcal{%
SU}_{i}(f(\tilde{\theta}),\theta _{i}) &=&\varnothing \text{, then }x(\tilde{%
\theta},\theta _{i})=f(\tilde{\theta})\text{.}
\end{eqnarray}

When $\mathcal{L}_{i}(f(\tilde{\theta}),\tilde{\theta}_{i})\cap \mathcal{SU}%
_{i}(f(\tilde{\theta}),\theta _{i})\neq \varnothing $, we may think of state 
$\tilde{\theta}$ as an announcement made by another agent(s) which agent $i$
could challenge (as a whistle-blower) when agent $i$ has true type $\theta
_{i}$. The following lemma shows that there is a challenge scheme in which
each whistle-blower $i$ facing state announcement $\tilde{\theta}$ has a
weak incentive to report his true type $\theta _{i}$ to challenge $\tilde{%
\theta}$.

\begin{lemma}
\label{BC}For any SCF $f$, there is a challenge scheme $\{x(\tilde{\theta}%
,\theta _{i})\}_{i\in \mathcal{I},\tilde{\theta}\in \Theta ,\theta _{i}\in
\Theta _{i}}$ such that for every $i\in \mathcal{I},\ \tilde{\theta}\in
\Theta ,$ and $\theta _{i}\in \Theta _{i}$, 
\begin{equation}
u_{i}(x(\tilde{\theta},\theta _{i}),\theta _{i})\geq u_{i}(x(\tilde{\theta}%
,\theta _{i}^{\prime }),\theta _{i}),\forall \theta _{i}^{\prime }\in \Theta
_{i}\text{;}  \label{best-C}
\end{equation}

\end{lemma}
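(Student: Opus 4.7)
The plan is to fix a state $\tilde{\theta}\in\Theta$ and an agent $i\in\mathcal{I}$, partition $\Theta_i$ into $S:=\{\theta_i\in\Theta_i:\mathcal{L}_i(f(\tilde{\theta}),\tilde{\theta}_i)\cap\mathcal{SU}_i(f(\tilde{\theta}),\theta_i)\neq\varnothing\}$ and its complement, and construct the required test allocations separately on the two pieces. For $\theta_i\notin S$ the definition of a challenge scheme already forces $x(\tilde{\theta},\theta_i)=f(\tilde{\theta})$, so all the freedom lies in choosing $x(\tilde{\theta},\theta_i)$ for $\theta_i\in S$. The key quantity I would work with is the \emph{preference wedge} $\phi(\ell\mid\theta_i,\tilde{\theta}_i):=v_i(\ell,\theta_i)-v_i(\ell,\tilde{\theta}_i)$, which is linear and bounded in $\ell\in\Delta(A)$; a short calculation shows that $\theta_i\in S$ is exactly the condition $\sup_{\ell\in\Delta(A)}\phi(\ell\mid\theta_i,\tilde{\theta}_i)>\phi(f_A(\tilde{\theta})\mid\theta_i,\tilde{\theta}_i)$, where $f_A(\tilde{\theta})$ denotes the lottery component of $f(\tilde{\theta})$.

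For each $\theta_i\in S$ I would then take a lottery $\ell^*(\theta_i)$ attaining that supremum and, exploiting the TU structure of the environment, set the transfer to agent $i$ as $t^*_i(\theta_i):=u_i(f(\tilde{\theta}),\tilde{\theta}_i)-v_i(\ell^*(\theta_i),\tilde{\theta}_i)$, with transfers to the other agents set to zero. By construction $u_i(x(\tilde{\theta},\theta_i),\tilde{\theta}_i)=u_i(f(\tilde{\theta}),\tilde{\theta}_i)$, placing the announced type on the lower-contour boundary, while $u_i(x(\tilde{\theta},\theta_i),\theta_i)=u_i(f(\tilde{\theta}),\tilde{\theta}_i)+\phi(\ell^*(\theta_i)\mid\theta_i,\tilde{\theta}_i)>u_i(f(\tilde{\theta}),\theta_i)$, so $x(\tilde{\theta},\theta_i)$ does land in $\mathcal{L}_i(f(\tilde{\theta}),\tilde{\theta}_i)\cap\mathcal{SU}_i(f(\tilde{\theta}),\theta_i)$ as a challenge scheme requires.

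The IC check in (\ref{best-C}) is then essentially algebraic. The same formula gives $u_i(x(\tilde{\theta},\theta_i'),\theta_i)=u_i(f(\tilde{\theta}),\tilde{\theta}_i)+\phi(\ell^*(\theta_i')\mid\theta_i,\tilde{\theta}_i)$, with the convention $\ell^*(\theta_i')=f_A(\tilde{\theta})$ when $\theta_i'\notin S$, so the truth-telling gap collapses to $\phi(\ell^*(\theta_i)\mid\theta_i,\tilde{\theta}_i)-\phi(\ell^*(\theta_i')\mid\theta_i,\tilde{\theta}_i)$. For $\theta_i\in S$ this is non-negative by the maximality of $\ell^*(\theta_i)$ over $\Delta(A)$; for $\theta_i\notin S$ it is non-negative by the very characterization $\sup_\ell\phi(\ell\mid\theta_i,\tilde{\theta}_i)\leq\phi(f_A(\tilde{\theta})\mid\theta_i,\tilde{\theta}_i)$, which bounds every $\phi(\ell^*(\theta_i')\mid\theta_i,\tilde{\theta}_i)$ from above by $\phi(f_A(\tilde{\theta})\mid\theta_i,\tilde{\theta}_i)$.

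The main technical obstacle I anticipate is that the supremum defining $\ell^*(\theta_i)$ may fail to be attained when $A$ is infinite, in which case the argument above only delivers an $\varepsilon$-approximate IC. I would handle this by taking a near-maximizer $\ell^{\varepsilon}(\theta_i)$ with slack $\varepsilon$, perturbing by a small-weight convex combination with the Abreu--Matsushima dictator allocation $y_i(\theta_i)$ from Lemma \ref{AM}, and adding a compensating transfer to agent $i$ to restore the lower-contour constraint; the strict AM margin $u_i(y_i(\theta_i),\theta_i)-u_i(y_i(\theta_i'),\theta_i)>0$ then provides the slack needed to convert the $\varepsilon$-IC into exact IC, after balancing the approximation error, the perturbation weight, and the size of the transfer correction.
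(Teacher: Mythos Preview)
Your main construction is correct and is a genuinely different route from the paper's. You build the test allocations \emph{ab initio} from the quasilinear structure: maximize the preference wedge $\phi(\ell\mid\theta_i,\tilde\theta_i)$ over lotteries and use a transfer to pin the allocation to the $\tilde\theta_i$-indifference curve, so that (\ref{best-C}) collapses to a one-line optimality comparison. The paper ignores the TU structure entirely. It starts from an \emph{arbitrary} challenge scheme $\bar x(\cdot,\cdot)$, forms the finite menu $X(\tilde\theta)=\{\bar x(\tilde\theta,\theta_i'):\theta_i'\in\Theta_i,\ \bar x(\tilde\theta,\theta_i')\neq f(\tilde\theta)\}$, and simply lets each type $\theta_i$ pick its favorite element of that menu. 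Finiteness of $\Theta_i$ guarantees the favorite exists; the lower-contour constraint is inherited from $\bar x$, and the strict-upper constraint follows because $\theta_i$'s favorite weakly dominates $\bar x(\tilde\theta,\theta_i)$. This is shorter, needs no maximizer over $\Delta(A)$, and does not use quasilinearity at all.

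Your patch for the non-attainment case has a gap. Once you add the compensating transfer that restores $u_i(x(\tilde\theta,\theta_i),\tilde\theta_i)=u_i(f(\tilde\theta),\tilde\theta_i)$, the IC comparison between $\theta_i$ and $\theta_i'$ is governed by the \emph{wedge} of the perturbed lotteries, not by raw $\theta_i$-utilities. The quantity that actually appears is
\[
\phi\bigl(y_i(\theta_i)\mid\theta_i,\tilde\theta_i\bigr)-\phi\bigl(y_i(\theta_i')\mid\theta_i,\tilde\theta_i\bigr)
=\bigl[v_i(y_i(\theta_i),\theta_i)-v_i(y_i(\theta_i'),\theta_i)\bigr]-\bigl[v_i(y_i(\theta_i),\tilde\theta_i)-v_i(y_i(\theta_i'),\tilde\theta_i)\bigr],
\]
and Lemma~\ref{AM} only signs the first bracket; the second can dominate. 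So the AM margin does not, as stated, convert $\varepsilon$-IC into exact IC. The paper's finite-menu argument sidesteps this entirely and would be the cleanest way to close the hole.
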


Having formulated the best challenge scheme, we will detail the mechanism in
the following sections.

\subsubsection{\label{ar}Message Space and Outcome Function}

For each agent $i$, we define the message space as follows. A generic
message 
\begin{equation*}
m_{i}=(m_{i}^{1},m_{i}^{2})\in M_{i}^{1}\times M_{i}^{2}, 
\end{equation*}%
where $M_{i}^{1}=M_{i}^{2}=\times _{j=1}^{I}\Theta _{j}$ is the state space out of which each agent reports messages twice.

For each message profile $m\in M$, the allocation is determined as follows:
\begin{equation*}
g\left( m\right) =\frac{1}{I^{2}}\sum_{i\in \mathcal{I}}\sum_{j\in \mathcal{I%
}}g_{i,j}(m)\text{,} 
\end{equation*}%
and, 
\begin{equation*}
g_{i,j}(m)=\left[ e_{i,j}\left( m_{i},m_{j}\right) \left( \frac{1}{I}%
\sum_{k\in \mathcal{I}}y_{k}(m_{k,k}^{1})\right) \oplus \left(
1-e_{i,j}\left( m_{i},m_{j}\right) \right) x(m_{i}^{2},m_{j,j}^{1})\right] 
\end{equation*}%
\newline
where $\{y_{k}(\cdot )\}\ $are the dictator lotteries for agent $k$ obtained
from Lemma \ref{AM}, moreover, we define 
\begin{equation*}
e_{i,j}(m_{i},m_{j})=\left\{ 
\begin{array}{ll}
0\text{,} & \text{if }m_{i}^{2}\in \Theta \text{, and }%
x(m_{i}^{2},m_{j,j}^{1})=f(m_{i}^{2})\text{;} \\ 
\varepsilon \text{,} & \text{if }m_{i}^{2}\in \Theta \text{, and }%
x(m_{i}^{2},m_{j,j}^{1})\not=f(m_{i}^{2})\text{;} \\ 
1\text{,} & \text{if }m_{i}^{2}\notin \Theta \text{.}%
\end{array}%
\right. 
\end{equation*}%
Hereafter we say 
agent $i$ \emph{is} \emph{challenged} by agent $j$ if $%
x(m_{i}^{2},m_{j,j}^{1})\not=f(m_{i}^{2})$, which is equivalent to $%
g_{i,j}\not=f(m_{i}^{2})$. 

In words, the designer first chooses an ordered pair(including agent $i$
paired with himself) of agents $(i,j)$ with equal probability. The outcome
function distinguishes three cases: (1) if agent $i$ reports a type profile
which identifies a state in $\Theta ,$ and agent $j$ does not challenge
agent $i$, then we implement 
\begin{equation*}
g_{i,j}(m)=f\left( m_{i}^{2}\right) ; 
\end{equation*}

(2) 
if agent $i$ reports a type profile which identifies a state in $\Theta ,$
and agent $j$ does challenge agent $i$, then we implement the compound
lottery\footnote{%
More precisely, if $x=\left( \ell ,\left( t_{i}\right) _{i\in \mathcal{I}%
}\right) $ and $x^{\prime }=\left( \ell ^{\prime },\left( t_{i}^{\prime
}\right) _{i\in \mathcal{I}}\right) $ are two outcomes in $X$, we identify $%
\alpha x\oplus \left( 1-\alpha \right) x^{\prime }$ with the outcome $\left(
\alpha \ell \oplus \left( 1-\alpha \right) \ell ^{\prime },\left( \alpha
t_{i}+\left( 1-\alpha \right) t_{i}^{\prime }\right) _{i\in \mathcal{I}%
}\right) $. For simplicity, we also write the compound lottery 
as $\frac{1}{I}\sum_{k\in \mathcal{I}}y_{k}(m_{k}^{1})$.}%
\begin{equation*}
g_{i,j}(m)=C_{i,j}^{\varepsilon }(m_{i},m_{j})\equiv \varepsilon \left( 
\frac{1}{I}\sum_{k\in \mathcal{I}}y_{k}\left( m_{k,k}^{1}\right) \right)
\oplus \left( 1-\varepsilon \right) x(m_{i}^{2},m_{j,j}^{1})\text{.} 
\end{equation*}%
Note that $C_{i,j}^{\varepsilon }(m_{i},m_{j})$ is an $\left( \varepsilon
,1-\varepsilon \right) $-combination of (i) the dictator lotteries which
occur with equal probability; and (ii) the allocation specified by the best
challenge scheme $x(m_{i}^{2},m_{j,j}^{1})$;

(3) if agent $i$ reports a type profile which does not identify a state in $%
\Theta $, then we implement the dictator lottery $\frac{1}{I}\sum_{k\in 
\mathcal{I}}y_{k}\left( m_{k,k}^{1}\right) .$

We abuse notations to write $x(m_{i}^{2},m_{j,j}^{1})\ $and $f(m_{i}^{2})$
for all $i$ and $j$ assuming $m_{i}^{2}$ identifies a state since the
outcome will be only determined by the dictator lottery otherwise.

Given the construction of dictator lotteries, we can choose $\varepsilon >0$
sufficiently small, and $\eta >0$ sufficiently large 
such that firstly, we have 
\begin{equation}
\eta >\sup_{i\in \mathcal{I},\theta _{i}\in \Theta _{i},m,m^{\prime }\in
M}\left\vert u_{i}(g\left( m\right) ,\theta _{i})-u_{i}(g\left( m^{\prime
}\right) ,\theta _{i})\right\vert \text{;}  \label{D}
\end{equation}%
secondly, it does not disturb the \textquotedblleft
effectiveness\textquotedblright\ of agent $j$'s challenge: due to (\ref%
{reverse}), we can have%
\begin{eqnarray}
x(m_{i}^{2},m_{j,j}^{1}) &\neq &f(m_{i}^{2})\Rightarrow  \notag \\
u_{j}(C_{i,j}^{\varepsilon }(m_{i},m_{j}),m_{i,j}^{2})
&<&u_{j}(f(m_{i}^{2}),m_{i,j}^{2})\text{ and }u_{j}(C_{i,j}^{\varepsilon
}(m_{i},m_{j}),m_{j,j}^{1})>u_{j}(f(m_{i}^{2}),m_{j,j}^{1})\text{.}
\label{bw}
\end{eqnarray}%
%
%
%
%
%
%
%
%
%
%
%
%
%
%
%
%
%
%
%
%
%
%
%
%
%
%
%
%
%
It means that whenever agent $j$ challenges agent $i$, the lottery $%
C_{i,j}^{\varepsilon }(m_{i},m_{j})$ is strictly worse than $f\left(
m_{i}^{2}\right) $ for agent $j$ when agent $i$ tells the truth about agent $%
j$'s preference in $m_{i}^{2}$; moreover, the lottery $C_{i,j}^{\varepsilon
}(m_{i},m_{j})$ is strictly better than $f\left( m_{i}^{2}\right) $ for
agent $j$ when agent $j$ tells the truth in $m_{j,j}^{1}$, which implies
that agent $i$ tells a lie about agent $j$'s preference. 

\subsubsection{\label{Fmoney}Transfer Rule}

We now define the transfer rule. For every message profile $m\in M$ and
every agent $i\in \mathcal{I}$, we specify the transfer received by agent $i$
as follows: 
\begin{equation*}
\tau _{i}(m)=\sum_{j\neq i}\left[ \tau
_{i,j}^{1}(m_{i},m_{j})+\tau _{i,j}^{2}(m_{i},m_{j})+\tau
_{i,j}^{3}(m_{i},m_{j})\right]. 
\end{equation*}

%
%
%
%
%
%
%
%
%
%
%
%
%
%
%
%
%
%
%
%
%
%
%
%
%
%
%
%
%
\begin{eqnarray}
\tau _{i,j}^{1}\left( m_{i},m_{j}\right)  &=&\left\{ 
\begin{tabular}{ll}
$-2\eta $ , & \text{if} $x(m_{i}^{2},m_{j,j}^{1})\neq f(m_{i}^{2})$; \\ 
$0$, & \text{otherwise.}%
\end{tabular}%
\ \right.   \label{bcha} \\
\tau _{i,j}^{2}\left( m_{i},m_{j}\right)  &=&\left\{ 
\begin{tabular}{ll}
$-\epsilon $ , & \text{if} $x(m_{j}^{2},m_{j,j}^{1})\neq f(m_{j}^{2})$ \text{
and } $m_{i,j}^{1}\neq m_{j,j}^{1}$; \\ 
$-\epsilon $, & if $x(m_{j}^{2},m_{j,j}^{1})=f(m_{j}^{2})$  and  $%
x(m_{j}^{2},m_{i,j}^{1})\neq f(m_{j}^{2})$; \\ 
$0$, & \text{otherwise.}%
\end{tabular}%
\ \right.   \label{spotother} \\
\tau _{i,j}^{3}\left( m_{i},m_{j}\right)  &=&\left\{ 
\begin{array}{ll}
-\eta \text{,} & \text{if }x(m_{i}^{2},m_{j,i}^{1})\neq f(m_{i}^{2}){;} \\ 
0\text{,} & \text{otherwise.}%
\end{array}%
\right.   \label{backtotruth}
\end{eqnarray}%
Recall that $\eta >0$ is chosen to be greater than the maximal utility
difference from the outcome function $g\left( \cdot \right) $; see (\ref{D}%
), and $\epsilon $ is an arbitrarily small positive number.

Given each agent $j\neq i$, we define the first transfer $\tau ^{1}$ for
agent $i$ so that agent $i$ is asked to pay $2\eta $ if agent $i$ is
challenged by agent $j$ or agent $i$ reports a type profile which does not
identify a state in $\Theta $.

We will first show that in equilibrium, no agent $i$ reports $m_{i}^{2}$
which does not identify a state in $\Theta$ because it immediately incurs the largest penalty for agent $i$
regarding how to report $m_{i}^{2}.$

For $\tau ^{2},$ it is designed so that agent $i$ is asked to pay $\epsilon $
if agent $j$ challenges himself, and agent $i$'s first report on agent $j$'s
type differs from the reported one from agent $j$ in agent $j$'s first
report. Note that later we will argue that whenever agent $j$ challenges
himself agent $j$ reports the truth in his own type from the first report,
thus, it is a weakly dominant strategy for agent $i$ to report agent $j$'s
true type in agent $i$'s first report.

For $\tau ^{3},$ it is designed such that given agent $j$ makes a positive
bet on agent $i$ challenging himself, agent $i$ is asked to pay $\eta $ if
his second report on his own type is different from the one reported by
agent $j$ in agent $j$'s first report.

\subsection{\label{proof}Proof of Implementation}




We outline our proof strategy as follows. We will first show that in
equilibrium, no agent $i$ reports $m_{i}^{2}$ which does not identify a
state in $\Theta .$ Then, we prove that every agent reports a state which
induces the socially desired outcome and no one challenges anyone, thus,
there is no transfer incurred on equilibrium path.

\begin{enumerate}
\item First, we show that in equilibrium, if an agent $i$ reports a state in
his second report, inducing an outcome different from the socially desired
one (we call it a false state), then agent $i$ must be challenged (maybe by
himself). This is established by monotonicity.

\item Second, in equilibrium, no one reports a false state challenged by
another agent. This is obtained by the first step and the penalty imposed on
agent if he is challenged.

\item Third, we show that no one reports a false state challenged by
himself. This is the key step and difficult point in the proof.

\begin{enumerate}
\item We first show that if agent $j$ has a positive belief that agent $i$
challenges himself, then agent $j$ must report agent $i$'s true type in agent $j$'s first report
(Note that this is due to $\tau^2$ and the scale of $\epsilon$ can be
arbitrarily small since agent $j$'s first report about agent $i$'s type only
affect agent $i$ through $\tau_i^2$).

\item Second, we show that from the agent $i$'s perspective, if he
challenges himself, then all the messages from his opponent report the agent $i$'s true type in
their first reports, thus, agent $i$ suffers $\eta$ with probability $1$ according to $\tau_i^3$.

\item The third transfer then provides a strict incentive for agent $i$ to
report the truth in his second report instead of reporting a lie being
challenged by himself.
\end{enumerate}

We then show that in the equilibrium path, the agent $i$ does not challenge
himself.
\end{enumerate}

\bigskip



Now, for each state $\tilde{\theta}\in \Theta$, we define the following set
of agents: 
\begin{equation*}
\mathcal{J}( \tilde{\theta}) \equiv \left\{ j\in \mathcal{I}:%
\mathcal{L}_{j}(f(\tilde{\theta}),\tilde{\theta}_{j})\cap \mathcal{SU}_{j}(f(%
\tilde{\theta}),\theta _{j})=\varnothing \right\} \text{.} 
\end{equation*}

\begin{claim}
\label{whistle}Given an arbitrary agent $i$ and message profile $m\in $supp$%
\left( \sigma \right) $, we have $x(m_{i}^{2},m_{j,j}^{1})\not=f(m_{i}^{2})\ 
$if and only if $j\not\in \mathcal{J}\left( m_{i}^{2}\right) $.
\end{claim}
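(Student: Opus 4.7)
The plan is to reduce the biconditional to the assertion that, in any correlated equilibrium, each agent $j$ reports his true type in his own first coordinate, i.e., $m_{j,j}^1 = \theta_j$ for every $m \in \mbox{supp}\,(\sigma)$ and every $j$. Once this is established, the defining property of the challenge scheme directly yields the claim: $x(m_i^2, \theta_j) \neq f(m_i^2)$ if and only if $\mathcal{L}_j(f(m_i^2), m_{i,j}^2) \cap \mathcal{SU}_j(f(m_i^2), \theta_j) \neq \varnothing$, which is precisely the defining condition for $j \notin \mathcal{J}(m_i^2)$.

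The bulk of the work is establishing $m_{j,j}^1 = \theta_j$. A direct inspection of (\ref{bcha}), (\ref{spotother}), and (\ref{backtotruth}) shows that $m_{j,j}^1$ does not appear in agent $j$'s transfer $\tau_j$, so $j$'s incentive on this coordinate is driven entirely by the outcome $g(m)$. Inside $g(m)$, the coordinate enters through (i) the dictator term $y_j(m_{j,j}^1)$, aggregated across pairs with total weight $\tfrac{1}{I^3}\sum_{k,\ell} e_{k,\ell}(m)$, and (ii) the challenge values $x(m_i^2, m_{j,j}^1)$ in the pairs $(i, j)$. Lemma \ref{AM} supplies the strict preference $u_j(y_j(\theta_j), \theta_j) > u_j(y_j(\theta_j'), \theta_j)$ for every $\theta_j' \neq \theta_j$, while Lemma \ref{BC} supplies the weak preference $u_j(x(m_i^2, \theta_j), \theta_j) \geq u_j(x(m_i^2, \theta_j'), \theta_j)$ for every $i$.

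Arguing by contradiction, I would suppose some $m \in \mbox{supp}\,(\sigma)$ has $m_{j,j}^1 = \theta_j' \neq \theta_j$ for some $j$ and show that the deviation $m_{j,j}^1 \mapsto \theta_j$ is strictly profitable. Classifying each pair $(i, j)$ according to whether $x(m_i^2, \theta_j')$ and $x(m_i^2, \theta_j)$ equal $f(m_i^2)$, Lemma \ref{BC} keeps every $x$-term contribution weakly in $j$'s favor, and Lemma \ref{AM} supplies a strict gain through the dictator lottery whenever its total weight is positive; combined, these drive a strictly positive aggregate change unless every pair stays in the no-challenge regime under both $m$ and the deviation. In that degenerate branch $g$ is unchanged and $j$ is indifferent, yet the biconditional holds trivially because $x(m_i^2, m_{j,j}^1) = f(m_i^2)$ for every $i$ forces $j \in \mathcal{J}(m_i^2)$. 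The main obstacle is the subcase in which a pair switches from the challenge regime under $m$ (so $x(m_i^2, \theta_j') \neq f(m_i^2)$) to the no-challenge regime under the deviation (so $x(m_i^2, \theta_j) = f(m_i^2)$), where dictator weight is shed in the pair even as the $x$-value moves to $f$; the key observation is that in this subcase $j \in \mathcal{J}(m_i^2)$, so $x(m_i^2, \theta_j')$ lies in $\mathcal{L}_j(f(m_i^2), m_{i,j}^2)$ but outside $\mathcal{SU}_j(f(m_i^2), \theta_j)$, forcing $u_j(x(m_i^2, \theta_j'), \theta_j) \leq u_j(f(m_i^2), \theta_j)$, so the pair's contribution to the deviation payoff remains non-negative. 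For $\varepsilon$ chosen sufficiently small relative to $\eta$ in (\ref{D}), the strict $\mathcal{SU}_j$-driven gain from any pair entering or remaining in the challenge regime dominates, delivering the contradiction and forcing $m_{j,j}^1 = \theta_j$.
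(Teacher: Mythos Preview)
Your approach is essentially the same as the paper's: both argue that the deviation $m_{j,j}^1 \to \theta_j$ is weakly profitable against every $m_{-j}$ (via Lemmas \ref{AM} and \ref{BC}) and strictly so against the particular $m_{-j}$ under consideration. The paper treats the two directions of the biconditional separately and invokes (\ref{bw}) for the strict gain in each, whereas you package them into the single intermediate assertion $m_{j,j}^1=\theta_j$ (modulo the degenerate no-challenge branch), but the underlying deviation argument is the same.
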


\begin{proof}
Fix agent $i\in \mathcal{I}$ and a message profile $m\in $supp$(\sigma )$.
We first prove the if-part. Suppose, on the contrary, that $%
x(m_{i}^{2},m_{j,j}^{1})=f(m_{i}^{2})$. Then, consider the deviation from $%
m_{j}$ to $\tilde{m}_{j}$ such that $\tilde{m}_{j,j}^{1}=\theta _{j}$ while $%
m_{j}$ and $\tilde{m}_{j}$ are the same in every other dimension. By (\ref%
{bw}), it delivers a strictly better payoff for agent $j$ against $m_{-j}$
where $j\not\in \mathcal{J}\left( m_{i}^{2}\right) $. At the same time, by
Lemmas \ref{AM} and \ref{BC}, the deviation from $m_{i}$ to $\tilde{m}_{i}$
generates no payoff loss for agent $i$ against any $m_{-i}^{\prime
}\not=m_{-i}$. Thus, the deviation $\tilde{m}_{j}$ is profitable, which
contradicts the hypothesis that $\sigma $ is an equilibrium of the game $%
\Gamma (\mathcal{M},\theta )$.

Next, we prove the only-if-part. Suppose, on the contrary, that there exists
some agent $j\in \mathcal{J}\left( m_{i}^{2}\right) $ such that $%
x(m_{i}^{2},m_{j,j}^{1})\not=f(m_{i}^{2}).$ Since $j\in \mathcal{J}\left(
m_{i}^{2}\right) $, we must have $m_{j,j}^{1}\neq \theta _{j}$. Define $%
\tilde{m}_{j}$ as a deviation which is identical to $m_{j}$ except that $%
\tilde{m}_{j,j}^{1}=\theta _{j}\neq m_{j,j}^{1}$. Then we have $x(m_{i}^{2},%
\tilde{m}_{j,j}^{1})=x(m_{i}^{2},\theta _{j})=m_{i}^{2}$ since $j\in 
\mathcal{J}\left( m_{i}^{2}\right) .$ By (\ref{bw}), $\tilde{m}_{j}$
generates a strictly better payoff for agent $j$ than $m_{j}$ against $%
m_{-j} $. By Lemmas \ref{AM} and \ref{BC}, we also know that agent $j$'s
payoff generated by $\tilde{m}_{j}$ is at least as good as that generated by 
$m_{j}$ against any $m_{-j}^{\prime }\neq m_{-j}$. Hence, $\tilde{m}_{j}$
constitutes a profitable deviation, which contradicts the hypothesis that $%
\sigma $ is an equilibrium of the game $\Gamma (\mathcal{M},\theta )$.
\end{proof}

\begin{claim}
\label{other} No one challenges an allocation announced in the second report
of any other agent, i.e., for any pair of agents $i,j\in \mathcal{I}$ with $%
i\neq j$ and any $m\in $supp$\left( \sigma \right) $, $%
x(m_{i}^{2},m_{j,j}^{1})=f(m_{i}^{2}) $.
\end{claim}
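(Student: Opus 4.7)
The plan is to argue by contradiction. Suppose there exist distinct agents $i,j$ and a profile $m^{\ast }\in \mbox{supp}(\sigma )$ with $x(m_{i}^{\ast ,2},m_{j,j}^{\ast ,1})\neq f(m_{i}^{\ast ,2})$. I would exhibit a conditional deviation $\tilde{m}_{i}$ for agent $i$ that strictly improves her payoff given the recommendation $m_{i}^{\ast }$, contradicting the correlated equilibrium condition. The candidate deviation is $\tilde{m}_{i}=(m_{i}^{\ast ,1},\theta )$: keep the first report intact and replace the second report by the true state, which agent $i$ knows by the complete-information assumption. Holding $m_{i}^{1}$ fixed is convenient because $\tau _{i,\cdot }^{2}$ depends on $m_{i}^{1}$ but not on $m_{i}^{2}$, so only $\tau ^{1}$, $\tau ^{3}$, and the outcome $g$ can change.

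The two ingredients that drive the argument are the scale choice in inequality (\ref{D}) and the $2\eta $-to-$\eta $ ratio between $\tau ^{1}$ and $\tau ^{3}$. By (\ref{D}), the swing in $u_{i}(g(\cdot ),\theta _{i})$ is at most $\eta $; each $\tau ^{3}$ switch is at most $\eta $; and each $\tau ^{1}$ switch is $2\eta $. Under $\tilde{m}_{i}^{2}=\theta $, no agent $j^{\prime }$ can challenge when $m_{j^{\prime },j^{\prime }}^{1}=\theta _{j^{\prime }}$, since $\mathcal{L}_{j^{\prime }}(f(\theta ),\theta _{j^{\prime }})\cap \mathcal{SU}_{j^{\prime }}(f(\theta ),\theta _{j^{\prime }})=\varnothing $ trivially. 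The deviation argument behind Claim \ref{whistle} essentially forces $m_{j^{\prime },j^{\prime }}^{1}=\theta _{j^{\prime }}$ on the relevant portion of the conditional support of $m_{i}^{\ast }$ (via the same Lemma \ref{AM}--Lemma \ref{BC} mechanism), so the $-2\eta $ penalty from $\tau _{i,j}^{1}$ that agent $i$ incurs at $m^{\ast }$ is eliminated on this positive-probability event. The arithmetic $2\eta >\eta +\eta $ then renders the deviation strictly profitable in expectation.

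The main obstacle I expect is the $\tau ^{3}$ bookkeeping: moving from $m_{i}^{\ast ,2}$ to $\theta $ changes each $\tau _{i,j^{\prime }}^{3}$, which penalizes by $\eta $ exactly when $x(m_{i}^{2},m_{j^{\prime },i}^{1})\neq f(m_{i}^{2})$, and one must verify that the net $\tau ^{3}$ swing never overtakes the $2\eta $ saving on the $\tau ^{1}$ side. This relies on the same ``truthful first reports on the support'' property extracted via Claim \ref{whistle}, together with the explicit scale of $\eta $ in (\ref{D}), to guarantee that $\tau ^{1}$-dominance is robust.
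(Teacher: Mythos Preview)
Your approach matches the paper's: argue by contradiction, have agent $i$ deviate from $m_i^{\ast}$ to $(m_i^{\ast,1},\theta)$, and use Claim~\ref{whistle} together with Lemmas~\ref{AM}--\ref{BC} to control first reports so that the $\tau^1$ saving dominates the allocation and $\tau^3$ losses. Two points need tightening, however.

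First, the inequality you write, ``$2\eta > \eta + \eta$,'' is literally false; the strictness must come from (\ref{D}), which bounds the allocation swing \emph{strictly} below $\eta$, so the relevant comparison is $2\eta > (\text{something strictly} < \eta) + \eta$. Second, and more substantively, you only assert that the $-2\eta$ penalty is incurred ``at $m^{\ast}$,'' a single positive-probability profile, whereas the correlated-equilibrium condition requires the deviation to be profitable in \emph{expectation conditional on} $m_i^{\ast}$. The paper closes this gap by noting that $j\notin\mathcal{J}(m_i^{\ast,2})$ is a property of $m_i^{\ast}$ alone; hence Claim~\ref{whistle} forces $j$ to challenge $i$ at \emph{every} $\hat{m}_{-i}$ in the conditional support of $m_i^{\ast}$, so the $2\eta$ penalty is incurred with conditional probability one. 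The same triggering of dictator lotteries then pins $\hat{m}_{k,k}^1=\theta_k$ for every $k$ across that entire support, which is what makes the deviation unchallenged everywhere, not just at $m^{\ast}$. Without this probability-one step your pointwise arithmetic does not lift to the required conditional-expectation inequality.
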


\begin{proof}
Suppose to the contrary that there exist $i,j\in \mathcal{I}$ with $i\neq j$%
, $m\in \mbox{supp}(\sigma )$ such that $x(m_{i}^{2},m_{j,j}^{1})\neq
f(m_{i}^{2})$. By Claim \ref{whistle}, $j\not\in \mathcal{J}\left(
m_{i}^{2}\right) $. By Claim \ref{whistle}, we know that for any agent $%
j\not\in \mathcal{J}\left( m_{i}^{2}\right) ,$ we have $x(f(m_{i}^{2}),\hat{m%
}_{j,j}^{1})\neq f(m_{i}^{2})$, for every $\hat{m}_{-i}$ such that $\sigma
(m_{i},\hat{m}_{-i})>0$. In addition, $\hat{m}_{j}^{1}=\theta _{j}$ for
every $j$ such that $j\not\in \mathcal{J}\left( m_{i}^{2}\right) .$ Hence,
from agent $i$'s perspective, conditional on playing $m_{i},$ he is
challenged with probability $1.$ Thus, agent $i$ is penalized $2\eta $ due
to (\ref{bcha})). Consider a deviation to $\tilde{m}_{i}$ which is the same
as $m_{i}$ except that $\tilde{m}_{i}^{2}=\left( \theta _{j}\right) _{j\in 
\mathcal{I}}$. Note that from agent $i$'s perspective, dictator lotteries
are triggered with probability one, hence, all the agents report the truth
in the first reports. Then, agent $i$ avoids paying the penalty $2\eta $ for
being challenged, while the potential loss from allocation is bounded by $%
\eta ,$ the loss from (\ref{backtotruth}) is bounded by $\eta .$ Therefore,
it is a profitable deviation. This contradicts to that $\sigma $ is an
equilibrium.
\end{proof}

\begin{claim}
\label{bet}For any pair of agents $i$ and $j$ such that $i\neq j$, message
profile $m\in $supp$\left( \sigma \right) $, if $x(m_{j}^{2},m_{j,j}^{1})%
\neq f(m_{j}^{2})$, then we have $m_{i,j}^{1}=\theta _{j}$ and $%
m_{i,i}^{1}=\theta _{i}$.
\end{claim}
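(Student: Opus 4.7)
The plan is to use agent $i$'s correlated equilibrium condition to rule out any deviation in either component $m_{i,i}^1$ or $m_{i,j}^1$ of his first report, given the hypothesis that $x(m_j^2,m_{j,j}^1)\neq f(m_j^2)$ in the support.

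For $m_{i,i}^1=\theta_i$, I would observe that $m_{i,i}^1$ enters agent $i$'s utility only through the dictator-lottery term $y_i(m_{i,i}^1)$ inside $g(m)$, since none of the transfers $\tau_{i,\cdot}^1$, $\tau_{i,\cdot}^2$, $\tau_{i,\cdot}^3$ depends on $m_{i,i}^1$. Under the hypothesis, $e_{j,j}(m_j,m_j)=\varepsilon>0$, so the dictator lottery carries weight at least $\varepsilon/I^2$ in $g(m_i,m_{-i})$ for the realization $m_{-i}$ given in the hypothesis (which lies in agent $i$'s conditional support). Lemma~\ref{AM} then makes the deviation $\tilde{m}_{i,i}^1=\theta_i$ strictly expected-payoff-improving whenever $m_{i,i}^1\neq\theta_i$, contradicting CE.

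For $m_{i,j}^1=\theta_j$, changing $m_{i,j}^1$ affects only $\tau_{i,j}^2$, since $g$ and all other transfer components are independent of $m_{i,j}^1$. I would partition agent $j$'s realizations $\bar{m}_j$ in agent $i$'s conditional support by the value of $x(\bar{m}_j^2,\bar{m}_{j,j}^1)$: for case B where $x(\bar{m}_j^2,\bar{m}_{j,j}^1)=f(\bar{m}_j^2)$, Claim~\ref{whistle} applied to a supporting profile containing $\bar{m}_j$ gives $j\in\mathcal{J}(\bar{m}_j^2)$ and hence $x(\bar{m}_j^2,\theta_j)=f(\bar{m}_j^2)$, so the deviation to $\theta_j$ incurs no penalty. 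For case A where $x(\bar{m}_j^2,\bar{m}_{j,j}^1)\neq f(\bar{m}_j^2)$, the penalty with report $\theta_j^{\ast}$ is $\epsilon$ iff $\theta_j^{\ast}\neq\bar{m}_{j,j}^1$, so the best matching argument pushes $m_{i,j}^1$ toward $\bar{m}_{j,j}^1$.

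The main obstacle is therefore the sub-claim $\bar{m}_{j,j}^1=\theta_j$ in every case A realization $\bar{m}_j$ in agent $i$'s conditional support. I would prove it by another application of the correlated equilibrium condition, now for agent $j$: the deviation $\tilde{m}_{j,j}^1=\theta_j$ leaves agent $j$'s transfers unchanged (they do not involve $m_{j,j}^1$) and improves agent $j$'s expected utility from $g$ strictly through the dictator lottery (Lemma~\ref{AM}), weakly via the off-diagonal $g_{\alpha,j}$ for $\alpha\neq j$ (using Claim~\ref{other} plus~(\ref{bw})), and weakly via $g_{j,j}$ (using Lemma~\ref{BC} together with the scale of $\varepsilon$ to handle the delicate sub-case where $e_{j,j}$ flips from $\varepsilon$ to $0$). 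With this sub-claim in hand, the deviation $\tilde{m}_{i,j}^1=\theta_j$ strictly reduces agent $i$'s expected $\tau_{i,j}^2$ penalty --- at least by the contribution of the hypothesis realization $m_j$ --- whenever $m_{i,j}^1\neq\theta_j$, contradicting the CE condition.
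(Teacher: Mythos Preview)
Your overall strategy matches the paper's: establish $m_{i,i}^1=\theta_i$ via the dictator-lottery incentive, then establish $m_{i,j}^1=\theta_j$ by analysing $\tau_{i,j}^2$, splitting agent $j$'s realizations into the self-challenging and non-self-challenging cases, and invoking Claim~\ref{whistle} for the latter. The paper is terser (it simply asserts $m_{j,j}^1=\theta_j$ whenever $j$ self-challenges and appeals to ``Lemmas~\ref{AM} and~\ref{BC}''), while you spell out the sub-claim argument; but the skeleton is the same.

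There is, however, a genuine gap in your $m_{i,i}^1$ step. You assert that ``$m_{i,i}^1$ enters agent $i$'s utility only through the dictator-lottery term $y_i(m_{i,i}^1)$ inside $g(m)$,'' and this is false: $m_{i,i}^1$ also enters every $g_{a,i}$ through the challenge allocation $x(m_a^2,m_{i,i}^1)$ and through $e_{a,i}$. The deviation $\tilde m_{i,i}^1=\theta_i$ therefore does \emph{not} affect only the dictator lottery, and you must argue that the remaining effects are weakly favorable --- precisely the work you carry out for the sub-claim on $m_{j,j}^1$ (Claim~\ref{other} plus~(\ref{bw}) for the off-diagonal $g_{a,i}$ with $a\neq i$, and Lemma~\ref{BC} for $g_{i,i}$). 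The paper acknowledges this by citing Lemmas~\ref{AM} \emph{and}~\ref{BC}, not Lemma~\ref{AM} alone; your argument as written is incomplete here.

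One simplification you can take: the ``delicate sub-case where $e_{j,j}$ flips from $\varepsilon$ to $0$'' never arises. In Case~A you have $x(\bar m_j^2,\bar m_{j,j}^1)\neq f(\bar m_j^2)$, so Claim~\ref{whistle} with $i=j$ gives $j\notin\mathcal J(\bar m_j^2)$, and by the definition of the challenge scheme this forces $x(\bar m_j^2,\theta_j)\neq f(\bar m_j^2)$. The analogous flip is likewise ruled out in the $m_{i,i}^1$ analysis, which removes the only place where ``the scale of $\varepsilon$'' would have been needed.
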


\begin{proof}
First, we fix an arbitrary correlated equilibrium $\sigma $. Given $m_{i}$
we collect all the opponents' message profile together with $m_{i}$ agent $i$
knows that agent $j$ is challenged by himself: 
\begin{equation*}
E_{j}(m_{i})=\{\tilde{m}_{-i}:x(\tilde{m}_{j}^{2},\tilde{m}_{j,j}^{1})\neq f(%
\tilde{m}_{j}^{2})\text{ and }\sigma (m_{i},\tilde{m}_{-i})>0.\}
\end{equation*}%
From the condition in Claim \ref{bet}, we know that $E_{j}(m_{i})\neq
\varnothing $. Conditional on $m_{i}$, dictator lotteries are triggered with
positive probability; hence by Lemmas \ref{AM} and \ref{BC}, $%
m_{i,i}^{1}=\theta _{i}$. Note that for any $\tilde{m}_{-i}$ such that $%
\sigma _{i}(\tilde{m}_{-i}|m_{i})>0$ and $\tilde{m}_{-i}\not\in E_{j}(m_{i}),
$ $j$ does not challenge himself. Thus by Claim \ref{whistle}, $x(\tilde{m}%
_{j}^{2},\theta _{j})=f(\tilde{m}_{j}^{2}).$ Hence, according to $\tau
_{i,j}^{2}$ (see (\ref{spotother})), we have $m_{i,j}^{1}=\theta _{j}.$ In
addition, when agent $j$ is challenged, we have $m_{j,j}^{1}=\theta _{j}$.%
\end{proof}



\begin{claim}
\label{nochallenge}For every agent $j$, message profile $m\in $supp$\left(
\sigma \right) $, we have $x(m^2_j,m^1_{k,k})= f(m^2_j)$ for every agent $k$.
\end{claim}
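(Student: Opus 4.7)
The plan is to reduce Claim \ref{nochallenge} to ruling out self-challenges: Claim \ref{other} already handles the case $k\neq j$, so it suffices to show that for every $j$ and every $m\in\mbox{supp}(\sigma)$, $x(m_j^2,m_{j,j}^1)=f(m_j^2)$. I will proceed by contradiction and exhibit a strictly profitable one-shot deviation for $j$ in $m_j^2$.

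Suppose some $j$ and $m\in\mbox{supp}(\sigma)$ satisfy $x(m_j^2,m_{j,j}^1)\neq f(m_j^2)$. First, conditional on $m_j$ the dictator-lottery component is triggered with positive weight $\varepsilon$ inside $g_{j,j}(m)$, so the incentive condition in Lemmas \ref{AM} and \ref{BC} applied to $j$'s first report about himself forces $m_{j,j}^1=\theta_j$, whence $x(m_j^2,\theta_j)\neq f(m_j^2)$. Next, applying Claim \ref{bet} to each joint profile $(m_j,\tilde m_{-j})$ with $\sigma(m_j,\tilde m_{-j})>0$ yields $\tilde m_{i,i}^1=\theta_i$ and $\tilde m_{i,j}^1=\theta_j$ for every $i\neq j$; conditional on $m_j$, the opponents' first reports about themselves and about $j$ are pinned down at the truth with probability one.

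The core argument is to consider the deviation $\tilde m_j$ that coincides with $m_j$ except $\tilde m_j^2=\theta$, and to compare $j$'s conditional expected payoff under $m_j$ and $\tilde m_j$. I will track four ingredients: (i) by (\ref{backtotruth}), $\tau_{j,i}^3$ switches from $-\eta$ (since $\tilde m_{i,j}^1=\theta_j$ and $x(m_j^2,\theta_j)\neq f(m_j^2)$) to $0$ (since $x(\theta,\theta_j)=f(\theta)$, because $j\in\mathcal{J}(\theta)$ when evaluated at $j$'s own true type), yielding an aggregate transfer gain of exactly $(I-1)\eta$; (ii) by (\ref{bcha}), $\tau_{j,i}^1$ remains $0$ before deviation (Claim \ref{other} applied with indices swapped, giving $x(m_j^2,\tilde m_{i,i}^1)=f(m_j^2)$) and after deviation ($x(\theta,\theta_i)=f(\theta)$); (iii) $\tau_{j,i}^2$ in (\ref{spotother}) does not depend on $m_j^2$ and is therefore unchanged; (iv) the allocation change satisfies $u_j(g(\tilde m_j,\tilde m_{-j}),\theta_j)-u_j(g(m_j,\tilde m_{-j}),\theta_j)>-\eta$ by the strict bound in (\ref{D}). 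Aggregating, $j$'s conditional expected utility gain strictly exceeds $(I-1)\eta-\eta=(I-2)\eta\geq 0$, contradicting that $\sigma$ is an equilibrium.

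The main obstacle is the bookkeeping in steps (i)--(iii): I must verify that perturbing only $m_j^2$ does not spawn new penalties through $\tau^1$ or $\tau^2$ components that I have not explicitly listed, and that no agent's self-challenge status (other than $j$'s) is altered by the deviation. These checks lean on the pinning-down conclusions of Claim \ref{bet} together with the fact that $x(\theta,\theta_k)=f(\theta)$ for every $k$ because $k\in\mathcal{J}(\theta)$ when $k$ is evaluated at his own true type. The strict inequality in (\ref{D}) is what delivers the contradiction in the two-agent case, where the transfer gain and the allocation bound are both equal to $\eta$; this is precisely why the construction required $\eta$ to strictly exceed the supremum of utility differences.
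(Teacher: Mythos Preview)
Your proposal is correct and follows essentially the same route as the paper: reduce to self-challenges via Claim \ref{other}, pin down first reports through Lemmas \ref{AM}--\ref{BC} and Claim \ref{bet}, and then exhibit the deviation $\tilde m_j$ with $\tilde m_j^2=\theta$, using $\tau^3$ as the source of the strict gain while checking $\tau^1$, $\tau^2$ and the allocation term are harmless. Your bookkeeping in (i)--(iv) is in fact more explicit than the paper's own proof, which simply asserts that ``for $\tau^1$ and $\tau^2$, there is no loss incurred'' and omits the allocation bound; your treatment of the $I=2$ case via the strict inequality in (\ref{D}) is likewise a detail the paper leaves implicit.
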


\begin{proof}
By Claim \ref{other}, we know that for every $k\neq j$, Claim \ref%
{nochallenge} holds. It remains to show that Claim \ref{nochallenge} holds
for $k=j$. Suppose there exists $m\in $supp$\left( \sigma \right) $, such
that $x(m_{j}^{2},m_{j,j}^{1})\not=f(m_{j}^{2})$. By Lemmas \ref{AM} and \ref%
{BC}, $m_{j,j}^{1}=\theta _{j}$. Now we show that for agent $j$ who reports $%
m_{j}$ with $m_{j,j}^{2}\not=\theta _{j}$, it is strictly better for agent $%
j $ to deviate to $\tilde{m}_{j}$ which is identical to $m_{j}$ but $\tilde{m%
}_{j,j}^{2}=\theta $ the true type profile. Specifically, due to Claim \ref%
{bet}, for every $\tilde{m}_{-j}$ such that $\sigma (m_{j},\tilde{m}_{-j})>0$%
, for every agent $i\neq j$, we have $\tilde{m}_{i,j}^{3}>0$, $\tilde{m}%
_{i,j}^{1}=\theta _{j}$ and $\tilde{m}_{i,i}^{1}=\theta _{i}$. By $\tau ^{3}$
in (\ref{backtotruth}), it is strictly better, and for $\tau ^{1}$ and $\tau
^{2},$ there is no loss incured. Hence, we know that the deviation is
profitable. Thus, it contradicts the hypothesis that $%
x(m_{j}^{2},m_{j,j}^{1})\not=f(m_{j}^{2})$.
\end{proof}

\begin{claim}
For every $m\in\text{supp}(\sigma)$, $g(m)=f(\theta)$, and $\tau_{i,j}(m)=0$
for every agent $i$ and $j$.
\end{claim}

\begin{proof}
By Claim \ref{whistle} and Claim \ref{nochallenge}, we know that for every $%
m\in \text{supp}(\sigma )$, $g(m)=f(\theta )$, and $\tau _{i,j}^{1}=0$ for
every agent $i$ and $j$. Due to the construction of the proper scoring rule $%
sc$, we know that $m_{i,j}^{3}=0$ for every agent $i$ and $j$. Hence $\tau
_{i,j}^{2}=\tau _{i,j}^{3}=0$ for every agent $i$ and $j$. Hence, we achieve
implementation with no transfers incurred.
\end{proof}

\begin{remark}
Throughout the proof of our main theorem, to argue the behavior on the
equilibrium path, whenever we use a possible profitable deviation to derive
a contradiction, we use the joint true type profile. Therefore, a feature of
our mechanism is that even with a restricted message space, as long as the
true type profiles are available, our mechanism would still implement.
\end{remark}

\subsection{Social Choice Correspondences}

\label{SCC}

A large portion of the implementation literature strives to deal with social
choice correspondences (hereafter, SCCs), i.e., multi-valued social choice
rules. In this section, we extend our implementation result to cover the
case of SCCs. We suppose that the designer's objective is specified by an
SCC $F:\Theta \rightrightarrows X$; and for simplicity, we assume that $%
F\left( \theta \right) $ is a finite set for each state $\theta \in \Theta $%
. It includes the special case where the co-domain of $F$ is $A$. Following 
\cite{maskin99}, we first define the notion of Nash implementation for an
SCC.

\begin{definition}
\label{Def-implement-SCC}An SCC $F$ is \textbf{implementable in correlated
equilibria} \textbf{by a finite mechanism} if there exists a mechanism $%
\mathcal{M}=\left( (M_{i},\tau _{i})_{i\in \mathcal{I}},g\right) $ such that
for every state $\theta \in \Theta $, the following two conditions are
satisfied: (i) for every $x\in F(\theta )$, there exists a pure-strategy
Nash equilibrium $m\ $in the game $\Gamma (\mathcal{M},\theta )$ with $%
g(m)=x $ and $\tau _{i}(m)=0$ for every agent $i\in \mathcal{I}$; and (ii)
for every $m\in \mbox{supp}\ (CE(\Gamma (\mathcal{M},\theta )))$, we have $%
\mbox{supp}(g(m))\subseteq F\left( \theta \right) $ and $\tau _{i}\left(
m\right) =0$ for every agent $i\in \mathcal{I}$.
\end{definition}

Second, we state the definition of Maskin monotonicity for an SCC.

\begin{definition}
\label{mono-SCC}An SCC $F$ satisfies \textbf{Maskin monotonicity} if for
each pair of states $\tilde{\theta}$ and $\theta $ and $z\in F(\tilde{\theta}%
)\backslash F\left( \theta \right) $, some agent $i\in \mathcal{I}$ and some
allocation $z^{\prime }\in X$ exist such that 
\begin{equation*}
\tilde{u}_{i}(z^{\prime },\tilde{\theta})\leq \tilde{u}_{i}(z,\tilde{\theta})%
\text{ and }\tilde{u}_{i}(z^{\prime },\theta )>\tilde{u}_{i}(z,\theta )\text{%
.} 
\end{equation*}
\end{definition}

We now state our implementation result for SCCs and relegate the proof to
Appendix \ref{NashC}.\footnote{%
When there are only two agents, we can still show that every
Maskin-monotonic SCC $F$ is \textit{weakly} implementable in Nash
equilibria, that is, there exists a mechanism which has a pure-strategy Nash
equilibrium and satisfies requirement (ii) in Definition \ref%
{Def-implement-SCC}.}

\begin{theorem}
\label{NashC}Suppose there are at least three agents. An SCC $F$ is
implementable in correlated equilibria by a finite mechanism if and only if
it satisfies Maskin monotonicity.
\end{theorem}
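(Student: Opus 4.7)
The plan is to reduce Theorem \ref{NashC} to the SCF case already proved, by augmenting each agent's second report with a \emph{selection} from the correspondence. The necessity direction adapts Maskin's original argument for Nash implementation. Fix $z\in F(\tilde{\theta})\setminus F(\theta)$ and let $m$ be the pure Nash equilibrium of $\Gamma(\mathcal{M},\tilde{\theta})$ with $g(m)=z$ and zero transfers guaranteed by part (i) of Definition \ref{Def-implement-SCC}. Every pure Nash equilibrium is a correlated equilibrium, so if $m$ were still an equilibrium at $\theta$ then part (ii) would force $z$ into $F(\theta)$, a contradiction. Hence some agent $i$ has a profitable deviation $m_i'$ at $\theta$, and the outcome $z'=g(m_i',m_{-i})$ lies in $\mathcal{L}_i(z,\tilde{\theta}_i)\cap \mathcal{SU}_i(z,\theta_i)$, exactly the condition of Definition \ref{mono-SCC}.

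For sufficiency, the plan is to keep the architecture of the mechanism in Section \ref{M} but replace the second report $m_i^2\in \Theta$ with $m_i^2=(\hat{\theta}_i,z_i)$, where $z_i\in F(\hat{\theta}_i)$ whenever $\hat{\theta}_i\in\Theta$ (fixed canonically otherwise). The best challenge scheme generalizes to $x(\hat{\theta},z,\theta_i)$: a test allocation in $\mathcal{L}_i(z,\hat{\theta}_i)\cap \mathcal{SU}_i(z,\theta_i)$ when that intersection is nonempty, and $z$ itself otherwise; the analogue of Lemma \ref{BC} goes through by the same construction because Definition \ref{mono-SCC} guarantees a nonempty intersection exactly when $z\in F(\hat{\theta})\setminus F(\theta)$ fails to persist. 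In the outcome function $g_{i,j}(m)$ I simply substitute $z_i$ for $f(m_i^2)$, and the challenge indicator $e_{i,j}$ now tests whether $x(\hat{\theta}_i,z_i,m_{j,j}^1)\neq z_i$. Transfers $\tau^1,\tau^2,\tau^3$ are copied verbatim in these new variables.

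Part (i) is then immediate: for any $\theta\in\Theta$ and $z\in F(\theta)$, the truthful profile $m_i=\bigl((\theta)_j,\ (\theta,z)\bigr)$ for every $i$ yields $g(m)=z$ with no agent challenged and no transfers triggered; the best challenge scheme ensures no unilateral deviation is profitable by exactly the same accounting as in the SCF case, so this is a pure Nash equilibrium. Part (ii) is obtained by replaying Claims \ref{whistle}--\ref{nochallenge} verbatim with $(\hat{\theta}_i,z_i)$ in place of $m_i^2$: in any correlated equilibrium no agent is challenged and every agent reports the truth in $m_{i,i}^1$, which forces $z_i\in F(\theta)$ for every $i$ via the analogue of Claim \ref{whistle}, so $g(m)=\frac{1}{I^2}\sum_{i,j} z_i$ has support inside $F(\theta)$.

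The main obstacle I anticipate is handling heterogeneity of selections across agents' second reports in equilibrium: the outcome is an average of $z_i$'s, and without further coordination the mixture itself need not satisfy any prescribed property beyond ``its support lies in $F(\theta)$''. This is precisely what Definition \ref{Def-implement-SCC}(ii) asks for, so the condition is consistent, but verifying it requires that every $z_i$ on the equilibrium path independently be supported in $F(\theta)$. The three-agent assumption enters here exactly as in Maskin's original SCC construction: with $I\geq 3$, the first reports $m_{\cdot ,i}^1$ of the other agents can cross-verify agent $i$'s type under $\tau^2$, and the adjusted $\tau^3$ penalizes any agent whose second-report state disagrees with the cross-verified type profile, preventing a ``vote-splitting'' equilibrium in which conflicting $z_i$'s combine into an outcome outside $F(\theta)$. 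Formal checking of these cross-verification transfers against self-challenging deviations (the analogue of Claim \ref{nochallenge}) will be the delicate step, but no new conceptual ingredient beyond those already used in Section \ref{proof} is needed.
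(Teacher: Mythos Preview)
Your reduction to the SCF case by simply attaching a selection $z_i\in F(\hat{\theta}_i)$ to each agent's second report has a genuine gap, and it is exactly in part (i) of Definition \ref{Def-implement-SCC}, not in part (ii) where you locate the difficulty. Take any $\theta$ and any $x\in F(\theta)$ that is \emph{not} unanimously top-ranked within $F(\theta)$, and consider the profile in which everyone truthfully plays $\bigl((\theta)_j,(\theta,x)\bigr)$. An agent $k$ who strictly prefers some $x'\in F(\theta)$ to $x$ can deviate to $\bigl((\theta)_j,(\theta,x')\bigr)$. Since $k$ still reports the true state, no agent $j$ can challenge the pair $(\theta,x')$: the relevant intersection $\mathcal{L}_j(x',\theta_j)\cap\mathcal{SU}_j(x',\theta_j)$ is empty, so $e_{k,j}=0$ and all of $\tau^1,\tau^2,\tau^3$ stay at zero. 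The outcome shifts to $\frac{1}{I}x'+\frac{I-1}{I}x$, which $k$ strictly prefers to $x$. Hence the ``truthful at $x$'' profile is not a Nash equilibrium, and you cannot hit an arbitrary $x\in F(\theta)$ as required by part (i). The best challenge scheme does nothing here because the deviation stays inside $F(\theta)$ and never triggers a challenge.

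The paper's mechanism avoids this precisely by \emph{not} letting each agent's $z_i$ enter the outcome independently. It adds a separate allocation report $m_i^3\in F(m_i^2)$ and defines an \emph{effective allocation} $\phi(m)$ by near-unanimity (the allocation announced by at least $I-1$ agents, else $m_1^3$); challenges and the outcome are computed against $\phi(m)$, not against $z_i$. With $I\geq 3$, a lone deviant cannot move $\phi(m)$ away from the unanimous $x$, which is what makes every $x\in F(\theta)$ supportable as a pure equilibrium and is the real place the three-agent assumption bites. Because ``agent $j$ challenges himself'' now depends on the whole profile through $\phi(m)$, the paper also adds a fourth message component $c_{i,j}\in[0,1]$ together with a proper scoring rule $sc_{i,j}$, and rewrites $\tau^3$ to trigger on $c_{j,i}>0$ rather than on a bilateral condition. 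None of these pieces are present in your ``copy $\tau^1,\tau^2,\tau^3$ verbatim'' plan, and the argument for Claim \ref{nochallenge} in the SCC case does not go through without them. Your diagnosis that $I\geq 3$ is needed for cross-verification of types under $\tau^2$ misplaces the role of the hypothesis.
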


Compared with Theorem \ref{CE} for SCFs, Theorem \ref{NashC} needs to
overcome additional difficulties. In the case of SCCs, each allocation $x\in
F\left( \theta \right) $ has to be implemented as the outcome of some
pure-strategy equilibrium. Hence, each agent must also report an allocation
to be implemented. It also follows that a challenge scheme for an SCC must
be defined for a type $\theta _{i}$ to challenge a pair $(\tilde{\theta},x)$
with $x\in F(\tilde{\theta})$.

\noindent \textbf{Remark.} Discuss NK here.

\subsection{Proof of Theorem \protect\ref{NashC}}

\label{proof: NashC}

We first extend the notion of\emph{\ }a\emph{\ challenge scheme} for an SCC.
Fix agent $i$ of type $\theta _{i}$. For each state $\tilde{\theta}\in
\Theta $ and $z\in F(\tilde{\theta}),$ if $\mathcal{L}_{i}(z,\tilde{\theta}%
_{i})\cap \hspace{0.1cm}\mathcal{SU}_{i}(z,\theta _{i})\neq \varnothing $,
we select some $x(\tilde{\theta},z,\theta _{i})\in \mathcal{L}_{i}(z,\tilde{%
\theta}_{i})\cap \mathcal{SU}_{i}(z,\theta _{i})$; otherwise, we set $x(%
\tilde{\theta},z,\theta _{i})=z$.

As in the case of SCFs, the following lemma shows that there is a challenge
scheme under which truth-telling induces the best allocation.

\begin{lemma}
\label{BCC}For any SCC $F$, there is a challenge scheme $\{x(\tilde{\theta}%
,z,\theta _{i})\}_{i\in \mathcal{I},\tilde{\theta}\in \Theta ,z\in F(\tilde{%
\theta}),\theta _{i}\in \Theta _{i}}$ such that for every $i\in \mathcal{I}%
,\ \tilde{\theta}\in \Theta ,\ z\in F(\tilde{\theta})$, and $\theta _{i}\in
\Theta _{i}$, 
\begin{equation}
u_{i}(x(\tilde{\theta},z,\theta _{i}),\theta _{i})\geq u_{i}(x(\tilde{\theta}%
,z,\theta _{i}^{\prime }),\theta _{i}),\forall \theta _{i}^{\prime }\in
\Theta _{i}\text{;}  \label{best-C}
\end{equation}

\end{lemma}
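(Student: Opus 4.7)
The plan is to mirror the argument of Lemma~\ref{BC}, introducing an additional index by $z\in F(\tilde\theta)$ in place of $f(\tilde\theta)$. Fix agent~$i$, state~$\tilde\theta$, and $z\in F(\tilde\theta)$. For each potential report $\theta_i'\in\Theta_i$, if $\mathcal L_i(z,\tilde\theta_i)\cap\mathcal{SU}_i(z,\theta_i')=\varnothing$ set the selection to $z$; otherwise pick any element of the intersection and, using the TU structure, subtract a small transfer from agent~$i$ to obtain a representative $\hat x(\theta_i')$ that lies strictly in the interior of $\mathcal L_i(z,\tilde\theta_i)$ while still sitting in the open set $\mathcal{SU}_i(z,\theta_i')$. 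This strict interiority is the analogue of the starting selection in Lemma~\ref{BC} and is the single ingredient that lets the subsequent perturbations stay inside the intersection.

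Next I would perturb each $\hat x(\theta_i')$ by blending in a small dose of the dictator lottery $y_i(\theta_i')$ supplied by Lemma~\ref{AM}, together with a further transfer to agent~$i$ that preserves strict $\mathcal L_i(z,\tilde\theta_i)$-membership. For a sufficiently small mixing weight $\delta>0$, continuity of $u_i$ preserves $\mathcal{SU}_i(z,\theta_i')$-membership as well, so the perturbed allocation remains a legitimate test allocation. Call the result $x(\tilde\theta,z,\theta_i')$.

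To verify (\ref{best-C}), the easy regime is when either $\theta_i$'s or $\theta_i'$'s intersection is empty: at least one of $x(\tilde\theta,z,\theta_i)$ and $x(\tilde\theta,z,\theta_i')$ then equals $z$, while the other lies in $\mathcal L_i(z,\tilde\theta_i)=\mathcal L_i(z,\theta_i)$, and the inequality is immediate. The main obstacle is the remaining case where both $\theta_i$ and $\theta_i'$ have non-empty intersections. Here Lemma~\ref{AM} contributes an $O(\delta)$ strictly positive gain toward truth-telling at the dictator-lottery layer, but the $\hat x$-component contributes an $O(1)$ difference of uncertain sign. I would close this gap by exploiting the freedom granted by the TU environment and the strict interiority established in the first step: adjust the transfer to agent~$i$ inside each $\hat x(\theta_i')$ so as to align the $\hat x$-level utilities across reports (as evaluated at $\tilde\theta_i$, say), so that the dominant source of variation at $\theta_i$ is the within-type surplus together with the Lemma~\ref{AM} gain. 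Choosing $\delta$ small enough relative to the resulting residuals then yields (\ref{best-C}), completing the construction.
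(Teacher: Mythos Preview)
Your approach has a genuine gap at the crucial step where both $\theta_i$ and $\theta_i'$ have nonempty intersections. You correctly flag that the $\hat x$-layer contributes an $O(1)$ term of unknown sign while the dictator-lottery layer contributes only $O(\delta)$. Your remedy is to ``align the $\hat x$-level utilities across reports (as evaluated at $\tilde\theta_i$),'' but with quasilinear preferences $u_i((\ell,t),\theta_i)=v_i(\ell,\theta_i)+t_i$, equalising $u_i(\hat x(\theta_i'),\tilde\theta_i)$ across $\theta_i'$ forces $t_i(\theta_i')=c-v_i(\ell(\theta_i'),\tilde\theta_i)$, and hence $u_i(\hat x(\theta_i'),\theta_i)=c+[v_i(\ell(\theta_i'),\theta_i)-v_i(\ell(\theta_i'),\tilde\theta_i)]$. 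The bracketed term is still $O(1)$ as $\theta_i'$ varies and there is no reason it is maximised at the truthful report. Moreover, taking $\delta$ small makes the Lemma~\ref{AM} correction \emph{weaker}, not stronger, so it cannot overcome an $O(1)$ residual; the scaling runs the wrong way. (In the easy regime your argument is essentially right, though $\mathcal L_i(z,\tilde\theta_i)=\mathcal L_i(z,\theta_i)$ should be the inclusion $\mathcal L_i(z,\tilde\theta_i)\subseteq\mathcal L_i(z,\theta_i)$, which is exactly what emptiness of the intersection gives.)

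The paper's route---carried over verbatim from the proof of Lemma~\ref{BC} with $z$ in place of $f(\tilde\theta)$---avoids all of this by a one-line device: begin with any challenge scheme $\{\bar x(\tilde\theta,z,\theta_i')\}$, and for each $\theta_i$ with nonempty intersection redefine $x(\tilde\theta,z,\theta_i)$ to be type $\theta_i$'s \emph{most preferred} element of the finite set $\{\bar x(\tilde\theta,z,\theta_i'):\theta_i'\in\Theta_i,\ \bar x(\tilde\theta,z,\theta_i')\neq z\}$. The new selection is still a test allocation (it lies in $\mathcal L_i(z,\tilde\theta_i)$ because every element of the set does, and in $\mathcal{SU}_i(z,\theta_i)$ because the set contains $\bar x(\tilde\theta,z,\theta_i)$, which does), and inequality~(\ref{best-C}) in the both-nonempty case holds \emph{by construction}. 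No dictator lotteries, perturbations, or transfer alignments are needed.
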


Lemma \ref{best-C} is established when we apply the proof of Lemma \ref{BC}
to the challenge scheme $\{x(\tilde{\theta},z,\theta _{i})\}_{i\in \mathcal{I%
},\tilde{\theta}\in \Theta ,z\in F(\tilde{\theta}),\theta _{i}\in \Theta
_{i}}$. Thus, we omit the proof here. 

Next, we propose a mechanism $\mathcal{M}=\left( (M_{i}),g,(\tau
_{i})\right) _{i\in \mathcal{I}}$ which will be used to prove the if-part of
Theorem \ref{NashC}. First, a generic message of agent $i$ is described as
follows: 
\begin{gather*}
m_{i}=\left( m_{i}^{1},m_{i}^{2},m_{i}^{3},m_{i}^{4}\right) \in
M_{i}=M_{i}^{1}\times M_{i}^{2}\times M_{i}^{3}\times M_{i}^{4}=\Theta
\times \Theta \times F(\Theta )\times C_{i}^{1}\text{ s.t.} \\
m_{i}^{3}\in F(m_{i}^{2})\text{. }
\end{gather*}%
That is, agent $i$ is asked to announce (1) agent $i$'s own type (which we
denote by $m_{i}^{1}$); (2) a type profile (which we denote by $m_{i}^{2}$);
(3) an allocation $m_{i}^{3}$ such that $m_{i}^{3}\in F(m_{i}^{2})$ if $%
m_{i}^{2}$ is a state;(4) $C_{i}^{1}=\times _{j\neq i}C_{i,j}$ with $%
C_{i,j}=[0,1]$, where each agent reports the information about his
opponents' status on \textquotedblleft being challenged\textquotedblright
(defined as follows). As we do in the case of SCFs, we write $m_{i,j}^{2}=%
\tilde{\theta}_{j}$ if agent $i$ reports in $m_{i}^{2}$ that agent $j$'s
type is $\tilde{\theta}_{j}$. 

We define $\phi (m)$ as follows: for each $m\in M$, 
\begin{equation*}
\phi (m)=\left\{ 
\begin{tabular}{l}
$x$, \\ 
$m_{1}^{3}$,%
\end{tabular}%
\begin{tabular}{l}
if $\left\vert \left\{ i\in \mathcal{I}:m_{i}^{3}=x\right\} \right\vert \geq
I-1$; \\ 
otherwise.%
\end{tabular}%
\right. 
\end{equation*}%
We say that $\phi (m)$ is an \emph{effective allocation} under $m$. In
words, the effective allocation is $x$, if there are $I-1$ players who agree
on allocation $x$; otherwise, the effective allocation is the allocation
announced by agent 1.

The allocation rule $g$ is defined as follows: for each $m\in M$,

\begin{equation*}
g\left( m\right) =\frac{1}{I^{2}}\sum_{i\in \mathcal{I}}\sum_{j\in \mathcal{I%
}}g_{i,j}(m)\text{,} 
\end{equation*}%
and, 
\begin{equation*}
g_{i,j}(m)=\left[ e_{i,j}\left( m_{i},m_{j}\right) \left( \frac{1}{I}%
\sum_{k\in \mathcal{I}}y_{k}(m_{k,k}^{1})\right) \oplus \left(
1-e_{i,j}\left( m_{i},m_{j}\right) \right) x(\tilde{\theta},\phi
(m),m_{j,j}^{1})\right] . 
\end{equation*}
where $\{y_{k}(\theta _{k})\}_{\theta _{k}\in \Theta _{k}}$ are the dictator
lotteries for agent $k$ as defined in Lemma \ref{AM}. Given a message
profile $m,$ and a pair of agents $i$ and $j,$ we say that agent $j$ \emph{%
challenges agent} $i$ if and only if $m_{i}^{3}=\phi (m)$ and $%
x(m_{i}^{2},\phi (m),m_{j,j}^{1})\neq \phi (m)$, i.e., agent $i$'s reported
allocation is an effective one and agent $j$ challenges this effective
allocation. We define the $e_{i,j}$-function as follows: for each $m\in M$,

\begin{equation*}
e_{i,j}(m_{i},m_{j})=\left\{ 
\begin{array}{ll}
\varepsilon \text{,} & \text{if agent }j\text{ challenges agent }i\text{;}
\\ 
0\text{,} & \text{otherwise }.%
\end{array}%
\right. 
\end{equation*}%
Recall that the $e_{i,j}$-function in Section \ref{ar} for the case of SCFs
only depends on $m_{i}$ and $m_{j}$. In contrast, the $e_{i,j}$-function
here depends on the entire message profile, as the nature of the challenge
also depends on whether the allocation reported by agent $i$ is an effective
allocation or not.

Fix $i,j\in \mathcal{I},\varepsilon \in (0,1)$, and $m\in M$. Then, we
define 
\begin{equation*}
C_{i,j}^{\varepsilon }(m)\equiv \varepsilon \times \frac{1}{I}\sum_{k\in 
\mathcal{I}}y_{k}\left( m_{k,k}^{1}\right) \oplus \left( 1-\varepsilon
\right) \times x(m_{i}^{2},\phi (m),m_{j,j}^{1})\text{.} 
\end{equation*}%
For every message profile $m$ and agent $j$, we can choose $\varepsilon >0$
sufficiently small such that (i) $C_{i,j}^{\varepsilon }(m)$ does not
disturb the \textquotedblleft effectiveness\textquotedblright\ of agent $j$%
's challenge, i.e.,

\begin{eqnarray}
x(m_{i}^{2},\phi (m),m_{j}^{1}) &\neq &\phi (m)\Rightarrow  \notag \\
u_{j}(C_{i,j}^{\varepsilon }(m),m_{i,j}^{2}) &<&u_{j}(\phi (m),m_{i,j}^{2})%
\text{ and }u_{j}(C_{i,j}^{\varepsilon }(m),m_{j,j}^{1})>u_{j}(\phi
(m),m_{j,j}^{1})\text{.}  \label{Cbw}
\end{eqnarray}

\subsubsection{\label{CFmoney}Transfer Rule}

We now define the transfer rule. For every message profile $m\in M$ and
every agent $i\in \mathcal{I}$, we specify the transfer received by agent $i$
as follows: 
\begin{equation*}
\tau _{i}(m)=\sum_{j\neq i}\left[ sc_{i,j}(m)+\tau
_{i,j}^{1}(m_{i},m_{j})+\tau _{i,j}^{2}(m_{i},m_{j})+\tau
_{i,j}^{3}(m_{i},m_{j})\right]. 
\end{equation*}

For the transfer $sc_{i,j}$, we define it as a strict scoring rule, a
\textquotedblleft bet\textquotedblright\ on his opponent agent $j$
challenging agent $j$ himself. Thus, the transfers depend on the probability
reported by agent $i$ on the event agent $j$ being challenged, $c_{i,j}$ and
the one realized. We define the proper scoring rule explicitly as follows, 
\begin{equation*}
sc_{i,j}\left( m\right) =\left\{ 
\begin{array}{ll}
-(c_{i,j})^{2}-(1-c_{i,j})^{2}+2c_{i,j}-1\text{, } & \text{if agent }j\text{
challenges himself}; \\ 
-(c_{i,j})^{2}-(1-c_{i,j})^{2}+2(1-c_{i,j})-1\text{,} & \text{otherwise.}%
\end{array}%
\right. 
\end{equation*}%
%
%
%
%
%
%
%
%
%
%
%
%
%
%
%
%
%
%
%
%
%
%
%
%
%
\begin{eqnarray}
\tau _{i,j}^{1}\left( m_{i},m_{j}\right) &=&\left\{ 
\begin{tabular}{ll}
$-2\eta $ , & \text{if} $\text{agent }j\text{ challenges agent }i$; \\ 
$0$, & \text{otherwise.}%
\end{tabular}%
\ \right.  \label{Cbcha} \\
\tau _{i,j}^{2}\left( m_{i},m_{j}\right) &=&\left\{ 
\begin{tabular}{ll}
$-\varepsilon $ , & \text{if} $\text{agent }j\text{ challenges himself}$ 
\text{ and } $m_{i,j}^{1}\neq m_{j,j}^{1}$; \\ 
$0$, & \text{otherwise.}%
\end{tabular}%
\ \right.  \label{Cspotother} \\
\tau _{i,j}^{3}\left( m_{i},m_{j}\right) &=&\left\{ 
\begin{array}{ll}
-\eta \text{,} & \text{if }c_{j,i}>0\text{ and }m_{i,i}^{2}\not=m_{j,i}^{1}{;%
} \\ 
0\text{,} & \text{otherwise.}%
\end{array}%
\right.  \label{Cbacktotruth}
\end{eqnarray}%
Recall that $\eta >0$ is chosen to be greater than the maximal utility
difference from the outcome function $g\left( \cdot \right) ,$ and $%
\varepsilon $ is a small positive nubmer; see (\ref{D}).

In the rest of the proof , we fix $\theta $ as the true state and $\sigma $
as a correlated equilibrium of the game $\Gamma (\mathcal{M},\theta )$
throughout.

\subsubsection{Proof of Implementation}

Now, for each state $\tilde{\theta}\in \Theta $, and each allocation $x\in
F\left( \tilde{\theta}\right) ,$ we define the following set of agents: 
\begin{equation*}
\mathcal{J}\left( \tilde{\theta},x\right) \equiv \left\{ j\in \mathcal{I}:%
\mathcal{L}_{j}(x,\tilde{\theta}_{j})\cap \mathcal{SU}_{j}(x,\theta
_{j})=\varnothing \right\} \text{.} 
\end{equation*}

\begin{claim}
\label{Cwhistle}Given an arbitrary agent $i$ and message profile $m\in $supp$%
\left( \sigma \right) $ such that $m_{i}^{3}=\phi (m)$, we have $%
x(m_{i}^{2},m_{i}^{3},m_{j,j}^{1})\not=m_{i}^{3}\ $if and only if $j\not\in 
\mathcal{J}\left( m_{i}^{2},m_{i}^{3}\right) $.
\end{claim}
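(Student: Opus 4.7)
The plan is to mirror the proof of Claim \ref{whistle}, replacing $f(m_i^2)$ by the reported and effective allocation $\phi(m)=m_i^3$, the challenge scheme $x(\tilde\theta,\theta_i)$ by its SCC counterpart $x(\tilde\theta,z,\theta_i)$, inequality (\ref{bw}) by (\ref{Cbw}), and Lemma \ref{BC} by Lemma \ref{BCC}. Both directions will be proved by contradiction using a single kind of deviation: agent $j$ swaps $m_{j,j}^1$ for the true $\theta_j$ while leaving every other coordinate of $m_j$ fixed.

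For the if-part, I assume $j \notin \mathcal{J}(m_i^2, m_i^3)$ yet $x(m_i^2, m_i^3, m_{j,j}^1) = m_i^3$. Since the intersection $\mathcal{L}_j(m_i^3, m_{i,j}^2) \cap \mathcal{SU}_j(m_i^3, \theta_j)$ is nonempty, the construction of the best challenge scheme forces $x(m_i^2, m_i^3, \theta_j) \in \mathcal{SU}_j(m_i^3, \theta_j)$, in particular $\neq m_i^3$. Under the deviation $\tilde{m}_{j,j}^1 = \theta_j$, agent $j$ now actively challenges agent $i$ on the $(i,j)$-component, so the allocation there becomes $C_{i,j}^{\varepsilon}$, which by (\ref{Cbw}) is strictly better than $m_i^3$ for type $\theta_j$. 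For every other component and every $m_{-j}'$ with $\sigma(m_j, m_{-j}') > 0$, Lemmas \ref{AM} and \ref{BCC} yield weak improvements, while the transfers $\tau_{j,\cdot}^1, \tau_{j,\cdot}^2, \tau_{j,\cdot}^3$ and $sc_{j,\cdot}$ do not depend on $m_{j,j}^1$. The deviation is strictly profitable, contradicting $\sigma \in CE(\Gamma(\mathcal{M},\theta))$.

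For the only-if-part, I assume $j \in \mathcal{J}(m_i^2, m_i^3)$ but $x(m_i^2, m_i^3, m_{j,j}^1) \neq m_i^3$. By construction of the scheme, the first hypothesis forces $x(m_i^2, m_i^3, \theta_j) = m_i^3$, so necessarily $m_{j,j}^1 \neq \theta_j$. Under the same deviation $\tilde{m}_{j,j}^1 = \theta_j$ the $(i,j)$-component switches from $C_{i,j}^{\varepsilon}$ back to $m_i^3$. Strict improvement for $\theta_j$ at this component follows by combining Lemma \ref{AM} on the dictator summand (since $y_j(\theta_j)$ strictly beats $y_j(m_{j,j}^1)$ at type $\theta_j$) with Lemma \ref{BCC} on the test-allocation summand (since $m_i^3 = x(m_i^2, m_i^3, \theta_j)$ maximizes $u_j(x(m_i^2, m_i^3, \cdot), \theta_j)$). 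The remaining components and transfers are handled exactly as above, and the deviation is strictly profitable.

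The main subtlety in porting Claim \ref{whistle} is that in the SCC mechanism the outcome and the very notion of ``agent $j$ challenges agent $i$'' are parameterized by the effective allocation $\phi(m)$, which is a global function of all agents' third-component reports, rather than by $m_i^3$ directly. I would therefore verify that the deviation preserves $\phi$: since $\phi$ depends only on $(m_k^3)_{k \in \mathcal{I}}$ and the deviation changes only $m_{j,j}^1$, we have $\phi(\tilde{m}_j, m_{-j}) = \phi(m) = m_i^3$, so the effective allocation and the challenge condition are comparable before and after the deviation. The hypothesis $m_i^3 = \phi(m)$ in the claim is precisely what makes the SCF-style argument carry over cleanly.
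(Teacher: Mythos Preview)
Your proposal is correct and follows essentially the same approach as the paper's own proof: both directions use the single deviation $\tilde m_{j,j}^{1}=\theta_{j}$, invoke (\ref{Cbw}) together with Lemmas \ref{AM} and \ref{BCC}, and derive a contradiction with $\sigma$ being a correlated equilibrium. Your explicit checks that $\phi$ is unchanged by the deviation (since it depends only on $(m_{k}^{3})_{k}$) and that agent $j$'s transfers do not depend on $m_{j,j}^{1}$ are useful points the paper leaves implicit.
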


\begin{proof}
Fix agent $i\in \mathcal{I}$ and a message profile $m\in $supp$(\sigma )$.
We first prove the if-part. Suppose, on the contrary, that there exists some
agent $j\not\in \mathcal{J}\left( m_{i}^{2},m_{i}^{3}\right) $ such that $%
x(m_{i}^{2},m_{i}^{3},m_{j,j}^{1})=m_{i}^{3}$. Then, consider the deviation
from $m_{j}$ to $\tilde{m}_{j}$ such that $\tilde{m}_{j,j}^{1}=\theta _{j}$
while $m_{j}$ and $\tilde{m}_{j}$ are the same in every other dimension. By (%
\ref{Cbw}), it delivers a strictly better payoff for agent $j$ against $%
m_{-j}$ where $j\not\in \mathcal{J}\left( m_{i}^{2},m_{i}^{3}\right) $. At
the same time, by Lemmas \ref{AM} and \ref{BCC}, the deviation from $m_{j}$
to $\tilde{m}_{j}$ generates no payoff loss for agent $j$ against any $%
m_{-j}^{\prime }\not=m_{-j}$. Thus, the deviation $\tilde{m}_{j}$ is
profitable, which contradicts the hypothesis that $\sigma $ is an
equilibrium of the game $\Gamma (\mathcal{M},\theta )$.

Next, we prove the only-if-part. Suppose, on the contrary, that there exists
some agent $j\in \mathcal{J}\left( m_{i}^{2},m_{i}^{3}\right) $ such that $%
x(m_{i}^{2},m_{i}^{3},m_{j,j}^{1})\not=m_{i}^{3}.$ Since $j\in \mathcal{J}%
\left( m_{i}^{2},m_{i}^{3}\right) $, we must have $m_{j,j}^{1}\neq \theta
_{j}$. Define $\tilde{m}_{j}$ as a deviation which is identical to $m_{j}$
except that $\tilde{m}_{j,j}^{1}=\theta _{j}\neq m_{j,j}^{1}$. Then we have $%
x(m_{i}^{2},m_{i}^{3},\theta _{j})=m_{i}^{3}$ since $j\in \mathcal{J}\left(
m_{i}^{2},m_{i}^{3}\right) .$ By (\ref{Cbw}), $\tilde{m}_{j}$ generates a
strictly better payoff for agent $j$ than $m_{j}$ against $m_{-j}$. By
Lemmas \ref{AM} and \ref{BCC}, we also know that agent $j$'s payoff
generated by $\tilde{m}_{j}$ is at least as good as that generated by $m_{j}$
against any $m_{-j}^{\prime }\neq m_{-j}$. Hence, $\tilde{m}_{j}$
constitutes a profitable deviation, which contradicts the hypothesis that $%
\sigma $ is an equilibrium of the game $\Gamma (\mathcal{M},\theta )$.
\end{proof}

\begin{claim}
\label{Cother} No one challenges an allocation announced by any other agent,
i.e., for any pair of agents $i,j\in \mathcal{I}$ with $i\neq j$ and any $%
m\in $supp$\left( \sigma \right) $ such that $m_{i}^{3}=\phi (m)$, $%
x(m_{i}^{2},m_{i}^{3},m_{j,j}^{1})=m_{i}^{3}$.
\end{claim}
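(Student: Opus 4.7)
The proof will mirror the SCF argument of Claim \ref{other}, adapted for the effective-allocation machinery of the SCC mechanism. Suppose for contradiction that there exist $i \neq j$ and $m \in \text{supp}(\sigma)$ with $m_i^3 = \phi(m)$ and $x(m_i^2, m_i^3, m_{j,j}^1) \neq m_i^3$. Claim \ref{Cwhistle} immediately forces $j \notin \mathcal{J}(m_i^2, m_i^3)$.

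The first step will establish that, conditional on $m_i$, agent $i$ is challenged with strictly positive probability. For every $m' \in \text{supp}(\sigma(\cdot \mid m_i))$ lying in the event $\{m_i^3 = \phi(m_i, m'_{-i})\}$, applying Claim \ref{Cwhistle} to $(m_i, m'_{-i})$ shows that every $j' \notin \mathcal{J}(m_i^2, m_i^3)$ satisfies $x(m_i^2, m_i^3, m'_{j',j'}^1) \neq m_i^3$ and is therefore a challenger. Since the original profile $m$ belongs to this event, it has strictly positive conditional probability, forcing a strictly negative expected $\tau^1_{i,j'}$ from (\ref{Cbcha}).

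The second step constructs a deviation $\tilde{m}_i$ that removes the challenge entirely. I will set $\tilde{m}_i^2 = \theta$ (the true state) and $\tilde{m}_i^3 = x^*$ for any fixed $x^* \in F(\theta)$, leaving the remaining coordinates of $m_i$ untouched. Since $\mathcal{L}_{j'}(x^*, \theta_{j'}) \cap \mathcal{SU}_{j'}(x^*, \theta_{j'}) = \varnothing$ for every agent $j'$, Lemma \ref{BCC} yields $x(\theta, x^*, \hat{\theta}_{j'}) = x^*$ for every $\hat{\theta}_{j'}$. Thus, whenever $\tilde{m}_i^3$ is effective under $\phi(\tilde{m}_i, m'_{-i})$ the challenge scheme itself returns $x^*$ so nobody can challenge, and whenever $\tilde{m}_i^3$ is not effective the challenge condition in the $e$-function fails by definition. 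Either way, $\tau^1_{i,j'}(\tilde{m}_i, m'_{j'}) = 0$ for every $j'$ and every $m'_{-i}$ in the conditional support.

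The final step verifies strict profitability: the saving of $2\eta$ accrued over the positive-probability challenge event must dominate the aggregate secondary losses, namely the allocation loss (at most $\eta$ by (\ref{D})), the $\tau^3$ loss from (\ref{Cbacktotruth}) (at most $\eta$ per opponent), the $O(\varepsilon)$ loss from (\ref{Cspotother}), and any shift in the scoring-rule term $sc_{i,j}$ (uniformly bounded in $m_i^4$). By calibrating $\eta$ sufficiently large relative to the allocation variation and $\varepsilon$ sufficiently small, the savings strictly exceed these losses, contradicting $\sigma \in CE(\Gamma(\mathcal{M}, \theta))$. The main obstacle is that $\phi(\tilde{m})$ depends on the entire profile, so the deviation's ripple through $\tau^3_{i,j'}$ (gated by the opponents' bets $c_{j',i}$) and through the compound lottery $C^{\varepsilon}_{i,j'}$ must be bounded uniformly to keep every secondary effect at $O(\eta)$ rather than scaling up with the number of potential challengers.
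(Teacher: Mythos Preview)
Your Step 2 contains a genuine error. The assertion that $\mathcal{L}_{j'}(x^*,\theta_{j'})\cap\mathcal{SU}_{j'}(x^*,\theta_{j'})=\varnothing$ yields $x(\theta,x^*,\hat\theta_{j'})=x^*$ for \emph{every} $\hat\theta_{j'}$ is false: by the definition of the challenge scheme, $x(\theta,x^*,\hat\theta_{j'})=x^*$ holds if and only if $\mathcal{L}_{j'}(x^*,\theta_{j'})\cap\mathcal{SU}_{j'}(x^*,\hat\theta_{j'})=\varnothing$, and for $\hat\theta_{j'}\neq\theta_{j'}$ this intersection can certainly be nonempty. Consequently your deviation $\tilde m_i$ need not wipe out all challenges, and the $\tau^1$ saving you rely on need not materialize. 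The paper closes this gap differently: it observes that, because the assumed challenge already triggers the dictator lottery, Lemma~\ref{AM} forces every opponent $j'$ in the conditional support to set $m^1_{j',j'}=\theta_{j'}$; only then does $x(\theta,x^*,\theta_{j'})=x^*$ follow, so that after the deviation to $\tilde m_i^2=\theta$ no agent can challenge $i$.

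A related problem sits in your Steps 1 and 3. You only establish that $i$ is challenged with \emph{strictly positive} probability $p$ conditional on $m_i$, whereas the paper obtains probability one via Claim~\ref{Cwhistle}. With $p<1$ the accounting in Step 3 collapses: the saving on $\tau^1$ is of order $2\eta p$, while the losses on $\tau^3$ can be $(I-1)\eta$ with no $p$-discount, and you cannot ``calibrat[e] $\eta$ sufficiently large'' to absorb this, since $\eta$ is fixed once and for all by (\ref{D}) and the bound you need would have to be of the form $\eta>C\eta$ for some $C>1$. The paper's route---probability-one challenge, hence probability-one dictator lottery, hence truthful first reports throughout the conditional support---is precisely what makes the $2\eta$ saving certain and simultaneously ensures the deviation incurs no new challenge and (through $m^1_{j',i}=\theta_i$) controls the $\tau^3$ loss.
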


\begin{proof}
Suppose to the contrary that there exist $i,j\in \mathcal{I}$ with $i\neq j$%
, $m\in \mbox{supp}(\sigma )$ such that such that $m_{i}^{3}=\phi (m)$, $%
x(m_{i}^{2},m_{i}^{3},m_{j,j}^{1})\not=m_{i}^{3}$. By Claim \ref{Cwhistle}, $%
j\not\in \mathcal{J}\left( m_{i}^{2},m_{i}^{3}\right) $. By Claim \ref%
{Cwhistle}, we know that for any agent $j\not\in \mathcal{J}\left(
m_{i}^{2},m_{i}^{3}\right) ,$ we have $x(m_{i}^{2},m_{i}^{3},\hat{m}%
_{j,j}^{1})\not=m$, for every $\hat{m}_{-i}$ such that $\sigma (m_{i},\hat{m}%
_{-i})>0$. In addition, $\hat{m}_{j,j}^{1}=\theta _{j}$ for every $j$ such
that $j\not\in \mathcal{J}\left( m_{i}^{2},m_{i}^{3}\right) .$ Hence, from
agent $i$'s perspective, conditional on playing $m_{i},$ he is challenged
with probability $1.$ Thus, agent $i$ is penalized $2\eta $ due to (\ref%
{Cbcha})). Consider a deviation to $\tilde{m}_{i}$ which is the same as $%
m_{i}$ except that $\tilde{m}_{i}^{2}=\theta $. Note that from agent $i$'s
perspective, dictator lotteries are triggered with probability one, hence,
all the agents report the truth in the first reports. Then, agent $i$ avoids
paying the penalty $2\eta $ for being challenged, while the potential loss
from allocation is bounded by $\eta ,$ the loss from (\ref{backtotruth}) is
bounded by $\eta .$ Therefore, it is a profitable deviation. This
contradicts to that $\sigma $ is an equilibrium.
\end{proof}

\begin{claim}
\label{Cbet}For any pair of agents $i$ and $j$ such that $i\neq j$, message
profile $m\in $supp$\left( \sigma \right) $ such that $m_{j}^{3}=\phi (m)$,
if $x(m_{j}^{2},m_{j}^{3},m_{j,j}^{1})\neq m_{j}^{3}$, then we have $%
c_{i,j}>0$, $m_{i,j}^{1}=\theta _{j}$ and $m_{i,i}^{1}=\theta _{i}$.
\end{claim}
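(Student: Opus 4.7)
My plan is to adapt the proof of Claim \ref{bet} (the SCF analog) to the SCC setting, carefully accommodating the additional message coordinate $m_i^4 \in C_i^1$ introduced by the proper scoring rule. Fix the correlated equilibrium $\sigma$, the pair $(i,j)$ with $i \neq j$, and a message profile $m \in \mbox{supp}(\sigma)$ with $m_j^3 = \phi(m)$ and $x(m_j^2, m_j^3, m_{j,j}^1) \neq m_j^3$. As in the SCF proof I would introduce
\[ E_j(m_i) = \{\tilde{m}_{-i} : \sigma(m_i, \tilde{m}_{-i}) > 0,\ \tilde{m}_j^3 = \phi(m_i, \tilde{m}_{-i}),\ x(\tilde{m}_j^2, \tilde{m}_j^3, \tilde{m}_{j,j}^1) \neq \tilde{m}_j^3\}, \]
the set of opponent profiles (consistent with $m_i$) on which agent $j$ challenges himself. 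The hypothesis forces $E_j(m_i) \neq \varnothing$; let $p > 0$ denote its conditional probability given $m_i$.

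Next I would derive the three conclusions as \emph{independent} optimizations over disjoint coordinates of $m_i$. For $c_{i,j} > 0$: inspection of the mechanism shows that $c_{i,j}$ enters agent $i$'s payoff only through $sc_{i,j}$, and since the quadratic score is strictly proper, agent $i$'s unique best response is $c_{i,j} = p > 0$. For $m_{i,j}^1 = \theta_j$: this coordinate enters agent $i$'s payoff only through $\tau_{i,j}^2$, which costs $\varepsilon$ whenever $j$ challenges himself and $m_{i,j}^1 \neq m_{j,j}^1$; so I must first pin down $m_{j,j}^1$ on $E_j(m_i)$. On $E_j(m_i)$ we have $e_{j,j}(m_j, m_j) = \varepsilon > 0$, so the dictator lottery $y_j(m_{j,j}^1)$ enters $g(m)$ with strictly positive weight, while $m_{j,j}^1$ appears in none of agent $j$'s transfer terms; Lemma \ref{AM} then forces $m_{j,j}^1 = \theta_j$. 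Consequently agent $i$ must set $m_{i,j}^1 = \theta_j$ to avoid the positive-probability $\varepsilon$-penalty. For $m_{i,i}^1 = \theta_i$: on $E_j(m_i)$ the dictator lottery fires, so conditional on $m_i$ the term $y_i(m_{i,i}^1)$ is present in $g$ with probability at least $p$, and again Lemma \ref{AM} forces $m_{i,i}^1 = \theta_i$.

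The main obstacle, as in the SCF case, is the ``piggy-back'' step pinning down $m_{j,j}^1 = \theta_j$ on $E_j(m_i)$, because I then use \emph{agent $j$'s} optimality in order to discipline agent $i$'s report of $m_{i,j}^1$. The delicate bookkeeping lies in verifying that $m_{j,j}^1$ does not appear in any coordinate of $\tau_j$: $\tau_{j,k}^1$ uses $m_{k,k}^1$ (via the challenge condition), $\tau_{j,k}^2$ uses $m_{j,k}^1$ and $m_{k,k}^1$, $\tau_{j,k}^3$ uses $m_{j,j}^2$ versus $m_{k,j}^1$, and $sc_{j,k}$ uses $c_{j,k}$. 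Once this separation is confirmed, the dictator-lottery argument of Lemma \ref{AM} applies cleanly and the remainder of the proof reduces to the same incentive accounting as Claim \ref{bet}.
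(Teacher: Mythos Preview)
Your proposal is correct and follows essentially the same route as the paper: define $E_j(m_i)$, use the strictly proper scoring rule to force $c_{i,j}>0$, use the dictator-lottery incentive to pin $m_{i,i}^1=\theta_i$ and (via agent $j$'s optimality) $m_{j,j}^1=\theta_j$, and then use $\tau_{i,j}^2$ to force $m_{i,j}^1=\theta_j$. One small omission: in the step where you pin $m_{j,j}^1=\theta_j$ (and likewise $m_{i,i}^1=\theta_i$), you assert that this coordinate affects agent $j$'s payoff only through the dictator lottery and invoke Lemma~\ref{AM} alone; but $m_{j,j}^1$ also enters $g$ through the challenge-scheme terms $x(m_k^2,\phi(m),m_{j,j}^1)$ in every $g_{k,j}$, and it changes the associated $e_{k,j}$ values. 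The paper closes this by invoking Lemma~\ref{BCC} alongside Lemma~\ref{AM}, which guarantees that truthful $m_{j,j}^1$ is weakly optimal in the challenge-scheme component as well, so that the strict gain from the dictator lottery is not offset.
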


\begin{proof}
First, we fix an arbitrary correlated equilibrium $\sigma $. Given $m_{i}$
we collect all the opponents' message profile together with $m_{i}$ agent $i$
knows that agent $j$ is challenged by himself: 
\begin{equation*}
E_{j}(m_{i})=\{(m_{i},\tilde{m}_{-i}):g_{j,k}(m_{i},\tilde{m}_{-i})\neq 
\tilde{m}_{j}^{3}=\phi (m_{i},\tilde{m}_{-i})\text{ for some }k\text{ and }%
\sigma (m_{i},\tilde{m}_{-i})>0.\} 
\end{equation*}%
From the condition in Claim \ref{Cbet}, we know that $E_{j}(m_{i})\neq
\varnothing $. By the proper scoring rule $sc_{i,j}$, $m_{i,j}^{4}=\sum_{%
\tilde{m}_{-i}\in E_{j}(m_{i})}\sigma _{-i}(\tilde{m}_{-i}|m_{i})$.
Conditional on $m_{i}$, dictator lotteries are triggered with positive
probability; hence by Lemmas \ref{AM} and \ref{BCC}, $m_{i,i}^{1}=\theta
_{i} $. In addition, when agent $j$ is challenged by himself, we have $%
m_{j,j}^{1}=\theta _{j}$. Hence, according to $\tau _{i,j}^{2}$ (see (\ref%
{Cspotother})), we have $m_{i,j}^{1}=\theta _{j}.$ 
\end{proof}

\begin{claim}
\label{Cnochallenge}For every agent $j$, message profile $m\in $supp$\left(
\sigma \right) $ such that $m_{j}^{3}=\phi (m)$, we have $%
x(m_{j}^{2},m_{j}^{3},m_{k,k}^{1})=m_{j}^{3}$ for every agent $k$.
\end{claim}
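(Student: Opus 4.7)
The plan is to mirror Claim \ref{nochallenge} from the SCF case, adapted to the effective-allocation mechanics of the SCC mechanism. The statement for $k\neq j$ is immediate from Claim \ref{Cother}. So the substance is to rule out self-challenge, that is, a profile $m\in\mbox{supp}(\sigma)$ with $m_j^3=\phi(m)$ yet $x(m_j^2,m_j^3,m_{j,j}^1)\neq m_j^3$. Supposing such an $m$ exists, self-challenge activates $e_{j,j}(m_j,m_j)=\varepsilon>0$, so the dictator component $\frac{1}{I}\sum_k y_k(m_{k,k}^1)$ enters $g_{j,j}(m)$ with positive weight; Lemmas \ref{AM} and \ref{BCC} then force $m_{j,j}^1=\theta_j$. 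Since $\mathcal{L}_j(m_j^3,m_{j,j}^2)\cap\mathcal{SU}_j(m_j^3,\theta_j)=\varnothing$ whenever $m_{j,j}^2=\theta_j$, the self-challenge hypothesis additionally forces $m_{j,j}^2\neq\theta_j$. The hypothesis of Claim \ref{Cbet} is now in force, so for every $\tilde m_{-j}$ with $\sigma(m_j,\tilde m_{-j})>0$ and every $i\neq j$ we have $c_{i,j}>0$, $\tilde m_{i,j}^1=\theta_j$, and $\tilde m_{i,i}^1=\theta_i$.

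The key step is to exhibit a profitable deviation $\tilde m_j$ that agrees with $m_j$ in every coordinate except that $\tilde m_j^2=\theta$ and $\tilde m_j^3$ is an arbitrary element of $F(\theta)$ (preserving the admissibility constraint $\tilde m_j^3\in F(\tilde m_j^2)$). I would then account for each contribution to agent $j$'s payoff separately. The critical saving is in $\tau_{j,i}^3$: before the deviation it equals $-\eta$ for every $i\neq j$, because $c_{i,j}>0$ and $m_{j,j}^2\neq\theta_j=\tilde m_{i,j}^1$; after the deviation it equals $0$, yielding a guaranteed gain of $(I-1)\eta$. The transfer $\tau_{j,i}^1$ cannot increase: with $\tilde m_{i,i}^1=\theta_i$ and the new effective allocation $\tilde m_j^3\in F(\theta)$, we have $x(\theta,\tilde m_j^3,\theta_i)=\tilde m_j^3$, so nobody challenges agent $j$ under $\tilde m_j$. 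The scoring rule $sc_{j,i}$ and the transfer $\tau_{j,i}^2$ depend only on opponents' messages and are unaffected by $j$'s deviation. Finally, the change in the allocation-utility term $u_j(g(\cdot),\theta_j)$ is bounded in absolute value by $\eta$ via (\ref{D}). With $I\geq 3$ (Theorem \ref{NashC}), the net gain is therefore at least $(I-2)\eta>0$, contradicting the equilibrium hypothesis.

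The main obstacle I anticipate is the non-local effect of replacing $m_j^3$ on the effective allocation $\phi(\tilde m_j,m_{-j})$, which propagates through every $g_{i',j'}(\tilde m_j,m_{-j})$ and through every indicator \textquotedblleft agent $j'$ challenges agent $i'$.\textquotedblright\ Unlike the SCF case, where the allocation splits cleanly by the single announcer, the collective vote defining $\phi$ couples the deviation with the whole profile. The uniform bound (\ref{D}) absorbs this shift in the allocation-utility term, but one must still verify that no new $\tau_{j,i}^1$ or $\tau_{j,i}^2$ penalty is activated by $j$'s deviation. This is precisely what the truthful reports forced by Claim \ref{Cbet}, together with the emptiness of $\mathcal{SU}_i(\tilde m_j^3,\theta_i)$ at the truth, deliver.
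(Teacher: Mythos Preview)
Your argument tracks the paper's: reduce to $k=j$ via Claim \ref{Cother}, pin down $m_{j,j}^1=\theta_j$ and $m_{j,j}^2\neq\theta_j$ from Lemmas \ref{AM} and \ref{BCC}, invoke Claim \ref{Cbet} to read off the opponents' first reports and $c_{i,j}>0$, and then deviate so that the $(I-1)\eta$ saving from $\tau^3$ dominates. The paper deviates only in the single coordinate $\tilde m_{j,j}^2=\theta_j$; you instead set $\tilde m_j^2=\theta$ and also replace $m_j^3$ by some element of $F(\theta)$, which has the virtue of automatically respecting the message-space constraint $m_j^3\in F(m_j^2)$.

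One step needs repair. You assert that $sc_{j,i}$ and $\tau_{j,i}^2$ \textquotedblleft depend only on opponents' messages\textquotedblright\ and are therefore unchanged by the deviation. In the SCC mechanism both hinge on the event \textquotedblleft agent $i$ challenges himself,\textquotedblright\ which is defined through the effective allocation $\phi(m)$; since your deviation alters $m_j^3$, it can shift $\phi$, and with it agent $i$'s self-challenge status. Your closing paragraph does anticipate the non-local effect of $\phi$ for $\tau^1$ and $\tau^2$, but the justification you sketch there (truthful $\tilde m_{i,i}^1=\theta_i$ together with $x(\theta,\tilde m_j^3,\theta_i)=\tilde m_j^3$) controls $\tau_{j,i}^1$, not a possible swing in $\tau_{j,i}^2$ or $sc_{j,i}$. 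The fix is easy: both terms are uniformly bounded (by $\varepsilon$ per opponent and by a constant independent of $\eta$, respectively), so the $(I-1)\eta$ gain from $\tau^3$ still dominates once $\eta$ is taken sufficiently large, which the construction allows. State this bound explicitly rather than claim invariance.
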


\begin{proof}
By Claim \ref{Cother}, we know that for every $k\neq j$, Claim \ref%
{Cnochallenge} holds. It remains to show that Claim \ref{Cnochallenge} holds
for $k=j$. Suppose there exists $m\in $supp$\left( \sigma \right) $, such
that $m_{j}^{3}=\phi (m)$, if $x(m_{j}^{2},m_{j}^{3},m_{j,j}^{1})\neq
m_{j}^{3}$. By Lemmas \ref{AM} and \ref{BCC}, $m_{j,j}^{1}=\theta _{j}$. Now
we show that for agent $j$ who reports $m_{j}$ with $m_{j,j}^{2}\not=\theta
_{j}$, it is strictly better for agent $j$ to deviate to $\tilde{m}_{j}$
which is identical to $m_{j}$ but $\tilde{m}_{j,j}^{2}=\theta _{j}$.
Specifically, due to Claim \ref{Cbet}, for every $\tilde{m}_{-j}$ such that $%
\sigma (m_{j},\tilde{m}_{-j})>0$, for every agent $i\neq j$, we have $\tilde{%
m}_{i,j}^{4}>0$, $\tilde{m}_{i,j}^{1}=\theta _{j}$. By $\tau ^{3}$ in (\ref%
{Cbacktotruth}), we know that the deviation is profitable. Thus, it
contradicts the hypothesis that $x(m_{j}^{2},m_{j}^{3},m_{j,j}^{1})\neq
m_{j}^{3}$.
\end{proof}

\begin{claim}
For every $m\in\text{supp}(\sigma)$, $g(m)=f(\theta)$, and $\tau_{i,j}(m)=0$
for every agent $i$ and $j$.
\end{claim}

\begin{proof}
By Claim \ref{Cwhistle} and Claim \ref{Cnochallenge}, we know that for every 
$m\in \text{supp}(\sigma )$, $g(m)\in F(\theta )$, and $\tau _{i,j}^{1}=0$
for every agent $i$ and $j$. Due to the construction of the proper scoring
rule $sc$, we know that $m_{i,j}^{4}=0$ for every agent $i$ and $j$. Hence $%
\tau _{i,j}^{2}=\tau _{i,j}^{3}=0$ for every agent $i$ and $j$. Hence, we
achieve implementation with no transfers incurred.
\end{proof}

\subsection{Related Literature}

\cite{aumann1974subjectivity} first proposed the concept of correlated
equilibrium (CE) as a generalization of the concept of independent
randomization among agents and \cite{hart1989existence} were the first to
offer a direct proof of existence for CE. An important feature of correlated
equilibria is that they can be derived as the result of several adaptive
procedures. The first of these were proposed by \cite{foster1997calibrated}
in which agents forecast other's play and play best responses to this
forecast - this eventually yields a correlated equilibrium. \cite%
{fudenberg1999conditional} show that a variant of fictitious play which
satisfies a condition called conditional universal consistency also yields a
CE. We build on the work of \cite{HartMasColell2000} who show (for finite N
player games) that the regret matching heuristic yields CE of the underlying
game. \cite{stoltz2007learning} generalize this result to infinite but
convex and compact strategy sets. In principle our implementation result
applies to each of these adaptive procedures and we select regret-matching
for concreteness.

\cite{maskin99} pioneered the concept of implementation of a social choice
function (SCF) in Nash equilibria and derived the appropriate
characterization, showing that a monotonicity condition was necessary and
almost sufficient for implementability in Nash equilibria. Later results
studied implementation in various solution concepts such as subgame perfect
Nash equilibria \cite{MR88}, iteratively undominated strategies \cite{AM92},
and rationalizability \cite{bergemann2011rationalizable}. Classical
mechanisms in this literature often use a device called an ``integer game''
which allows the designer to eliminate many undesirable message profiles by
augmenting them with an integer so that the agent with the highest integer
can dictatorially pick the outcome. This creates a ``race to the top'' and
thus such games possess no equilibria. These ``tail chasing devices'' are
critiqued in both \cite{Jackson92} and \cite{Moore92}. Further, mechanisms
which use integer games cannot be simulated (since they have an infinite
message space) although a variant called a modulo game (with a finite
message space) can be implemented in simulations. However, a modulo game
yields undesirable mixed strategy equilibria, so that finite mechanisms are
best suited for simulations. Recent papers have attempted to place the
theory of implementation on a more theoretically sound footing by eschewing
the use of integer games \cite{MAM,chen2021rationalizable,Fehr2021}, but
general results for implementation in correlated equilibria have not yet
been derived. This paper presents, to our knowledge, the first
characterization of SCFs which can be implemented in CE and also presents a
well behaved mechanism which does not use integer games.

Most recently, \cite{pei2025robust} show that the mechanism in \cite%
{chen2021rationalizable} robustly implements SCFs satisfying
maskin-monotonicity* by invoking Proposition 3.2 in \cite%
{kajii1997robustness} which shows that a unique correlated equilibrium must
be a robust equilibrium. They also present mechanisms which robustly
implement SCFs which do not satisfy maskin monotonicity by relying on costly
information. Relatedly, the mechanism proposed here implements SCFs
satisfying maskin monotonicty (which is weaker than maskin monotonicity*) as
the outcome of a unique correlated equilibrium of the game induced by the
mechanism, and thus also robustly implements. 

\section{\label{SIM}Simulation: Bilateral Trade}

\subsection{\label{SIMENV}Environment}

A seller $S$ has an object for sale to a buyer $B$. The seller can have a low or a high cost of production while the buyer may have a low or high valuation for the product. The costs for the seller are given by $c^H = 8$ and $c^L = 2$, while the valuations for the buyer are given by $v^H = 20$ and $v^L = 12$. Since each of the players in the game can have one of two types, there are four possible states of the world. The designer can impose
transfers and hence the set of outcomes $A$ is the set of triplets $\left(
q,t_{B},t_{S}\right) $ with $q\in \left[ 0,1\right] $ representing the
amount of the good being traded, $t_{B}$ is the price paid by $B$ and $t_{S}$
is the payment received by $S$. For any outcome $\left( q,t_{B},t_{S}\right)
,$ $B$'s utility is $u_{B}=qv+t_{B}$ when the good quality is $v,$ and the
seller's utility is $u_{S}=t_{S}-qc$. The desired allocation (SCF) is shown in Table \ref{tab:original_scf}.
\begin{table}[h!]
\centering
\caption{Social Choice Function (SCF) Outcomes by State}
\label{tab:original_scf}
\begin{tabular}{cc ccc}
\toprule
\multicolumn{2}{c}{\textbf{State of the World}} & \multicolumn{3}{c}{\textbf{SCF Outcome}} \\
\cmidrule(r){1-2} \cmidrule(l){3-5}
\textbf{Buyer Type} & \textbf{Seller Type} & \textbf{Quantity ($q$)} & \textbf{Buyer Payment ($t_B$)} & \textbf{Seller Payment ($t_S$)} \\
\midrule
L & L & 1 & -6 & 6 \\
H & H & 1 & -10 & 10 \\
H & L & 1 & -10 & 10 \\
L & H & 1 & -10 & 10 \\
\bottomrule
\end{tabular}
\end{table}
\subsection{\label{SIMALLOC}Test Allocations}
For the SCF to be implementable, it must satisfy Maskin Monotonicity. This condition ensures that if agents misreport the state of the world, at least one agent (a ``whistle-blower'') will find it in their interest to challenge the lie. 

Specifically, if the true state is $\theta$ but agents report a lie $\theta'$, there must be a test allocation from a challenge scheme, denoted $x(\theta', \theta_i)$ as in Lemma \ref{BC}, for a whistle-blowing agent $i$ (whose type differs between $\theta$ and $\theta'$) such that two conditions are met:
\begin{enumerate}
    \item \textbf{No False Alarms:} In the lie state $\theta'$, the agent weakly prefers the original outcome to the test allocation. That is, $u_i(f(\theta'), \theta'_i) \geq u_i(x(\theta', \theta_i), \theta'_i)$.
    \item \textbf{Incentive to Expose Lie:} In the true state $\theta$, the agent strictly prefers the test allocation to the outcome from the lie. That is, $u_i(x(\theta', \theta_i), \theta_i) > u_i(f(\theta'), \theta_i)$.
\end{enumerate}

For the bilateral trade simulation, the challenge scheme $\{x(\theta', \theta_i)\}$ is defined with the following key test allocations:
\begin{itemize}
    \item When the Buyer is the whistle-blower (true type $v^L$, lie type $v^H$), the test allocation is $x(\theta', v^L) = (0.5, -2, 2)$.
    \item When the Seller is the whistle-blower (true type $c^H$, lie type $c^L$), the test allocation is $x(\theta', c^L) = (0.5, -3, 3)$.
\end{itemize}

To demonstrate their validity, we will show that for the true state of the world being \textbf{(L, H)}, a whistle-blower exists for every possible lie.

\subsubsection{Lie State: (L, L)}
If agents report $\theta'=(L,L)$, the outcome is $f(L,L)=(1, -6, 6)$. The Seller's true type ($c^H$) differs from the lie ($c^L$), making them the whistle-blower. We check the two conditions for the Seller using the test allocation $x((L,L), c^H) = (0.5, -3, 3)$.
\begin{itemize}
    \item \textbf{Condition 1 (No False Alarms):} We check if $u_S(f(L,L), c^L) \geq u_S(x((L,L), c^H), c^L)$.
    \begin{itemize}
        \item $u_S(f(L,L), c^L) = 6 - (2 \times 1) = 4$.
        \item $u_S(x((L,L), c^H), c^L) = 3 - (2 \times 0.5) = 2$.
        \item The condition $4 \geq 2$ holds.
    \end{itemize}
    \item \textbf{Condition 2 (Incentive to Expose):} We check if $u_S(x((L,L), c^H), c^H) > u_S(f(L,L), c^H)$.
    \begin{itemize}
        \item $u_S(x((L,L), c^H), c^H) = 3 - (8 \times 0.5) = -1$.
        \item $u_S(f(L,L), c^H) = 6 - (8 \times 1) = -2$.
        \item The condition $-1 > -2$ holds.
    \end{itemize}
\end{itemize}
Both conditions are met, so the Seller can act as a whistle-blower.

\subsubsection{Lie State: (H, H)}
If agents report $\theta'=(H,H)$, the outcome is $f(H,H)=(1, -10, 10)$. The Buyer's true type ($v^L$) differs from the lie ($v^H$), making them the whistle-blower. We check the two conditions for the Buyer using the test allocation $x((H,H), v^L) = (0.5, -2, 2)$.
\begin{itemize}
    \item \textbf{Condition 1 (No False Alarms):} We check if $u_B(f(H,H), v^H) \geq u_B(x((H,H), v^L), v^H)$.
    \begin{itemize}
        \item $u_B(f(H,H), v^H) = (20 \times 1) - 10 = 10$.
        \item $u_B(x((H,H), v^L), v^H) = (20 \times 0.5) - 2 = 8$.
        \item The condition $10 \geq 8$ holds.
    \end{itemize}
    \item \textbf{Condition 2 (Incentive to Expose):} We check if $u_B(x((H,H), v^L), v^L) > u_B(f(H,H), v^L)$.
    \begin{itemize}
        \item $u_B(x((H,H), v^L), v^L) = (12 \times 0.5) - 2 = 4$.
        \item $u_B(f(H,H), v^L) = (12 \times 1) - 10 = 2$.
        \item The condition $4 > 2$ holds.
    \end{itemize}
\end{itemize}
Both conditions are met, so the Buyer can act as a whistle-blower.

\subsubsection{Lie State: (H, L)}
If agents report $\theta'=(H,L)$, the outcome is $f(H,L)=(1, -10, 10)$. Here, both players' true types differ from their reported types. We only need one to be a whistle-blower. Let's check the Buyer (type change H $\rightarrow$ L) using the test allocation $x((H,L), v^L) = (0.5, -2, 2)$. The calculations are identical to the previous case, as the lie outcome is the same.
\begin{itemize}
    \item \textbf{Condition 1} ($10 \geq 8$) holds.
    \item \textbf{Condition 2} ($4 > 2$) holds.
\end{itemize}
The Buyer can act as a whistle-blower. Since a valid whistle-blower exists for every possible lie, the SCF is implementable for the true state of (L, H).

\subsection{\label{SIMDICTATOR}Dictator Lotteries}
The mechanism also employs ``dictator lotteries'' to ensure that agents have a strict incentive to report their own type truthfully in certain parts of their message[cite: 50]. A set of dictator lotteries $\{y_i(\theta_i)\}$ is valid if an agent $i$ with true type $\theta_i$ strictly prefers the lottery associated with their true type over the lottery for any other type $\theta'_i$[cite: 51]. That is, for any $\theta_i \neq \theta'_i$, it must be that $u_i(y_i(\theta_i), \theta_i) > u_i(y_i(\theta'_i), \theta_i)$.

The lotteries used in the simulation are as follows:

\subsubsection{Buyer's Dictator Lotteries}
The two dictator lotteries available for the buyer are:
\begin{itemize}
    \item If buyer's reported type is H: $y_B(H) = (1, -15, 15)$.
    \item If buyer's reported type is L: $y_B(L) = (0, 0, 0)$.
\end{itemize}
\textbf{Validity Check:}
\begin{itemize}
    \item When the Buyer's true type is \textbf{H} ($v^H=20$):
    \begin{itemize}
        \item Utility from true report $y_B(H)$: $u_B = (20 \times 1) - 15 = 5$.
        \item Utility from false report $y_B(L)$: $u_B = (20 \times 0) + 0 = 0$.
        \item The condition $5 > 0$ holds.
    \end{itemize}
    \item When the Buyer's true type is \textbf{L} ($v^L=12$):
    \begin{itemize}
        \item Utility from true report $y_B(L)$: $u_B = (12 \times 0) + 0 = 0$.
        \item Utility from false report $y_B(H)$: $u_B = (12 \times 1) - 15 = -3$.
        \item The condition $0 > -3$ holds.
    \end{itemize}
\end{itemize}
The lotteries are valid for the Buyer.

\subsubsection{Seller's Dictator Lotteries}
The two dictator lotteries available for the seller are:
\begin{itemize}
    \item If seller's reported type is H: $y_S(H) = (0, 0, 0)$.
    \item If seller's reported type is L: $y_S(L) = (1, -4, 4)$.
\end{itemize}
\textbf{Validity Check:}
\begin{itemize}
    \item When the Seller's true type is \textbf{H} ($c^H=8$):
    \begin{itemize}
        \item Utility from true report $y_S(H)$: $u_S = 0 - (8 \times 0) = 0$.
        \item Utility from false report $y_S(L)$: $u_S = 4 - (8 \times 1) = -4$.
        \item The condition $0 > -4$ holds.
    \end{itemize}
    \item When the Seller's true type is \textbf{L} ($c^L=2$):
    \begin{itemize}
        \item Utility from true report $y_S(L)$: $u_S = 4 - (2 \times 1) = 2$.
        \item Utility from false report $y_S(H)$: $u_S = 0 - (2 \times 0) = 0$.
        \item The condition $2 > 0$ holds.
    \end{itemize}
\end{itemize}
The lotteries are also valid for the Seller.

\subsection{Regret Minimization}

The reader is referred to \cite{HartMasColell2000}, \cite{HartMasColell2003}%
, and \cite{Hart2005} for a comprehensive description of the regret
minimization heuristic and a theoretical proof of the claim that when
players use the regret minimization heuristic to play a game, the long run
distribution of the play eventually converges to a correlated equilibrium of
the game. For completeness, we provide a very brief overview of the regret
minimization heuristic here.

Consider a player $i$ with a set of messages $M_{i}$ in a repeated game
setting over $T$ periods. Let $m_{i}^{t}\in M_{i}$ denote the message chosen
by player $i$ in period $t$, and $m_{-i}^{t}$ denote the messages chosen by
all other players in period $t$. Let $u_{i}(m_{i},m_{-i})$ be the payoff
function of player $i$.\footnote{%
The reader is reminded that the utility function $u(m_{i},m_{-i})$ comprises
the utility from the outcome, denoted by $v(g(m_{i},m_{-i}))$, and any
transfers the mechanism might prescribe.}

\subsubsection{Regret Calculation}

For each message $m_{i}^{\prime }\in M_{i}$, the regret of not having played 
$m_{i}^{\prime }$ up to period $T$ is defined as: 
\begin{equation*}
R_{i}^{T}(m_{i}^{\prime })=\frac{1}{T}\sum_{t=1}^{T}\left[
u_{i}(m_{i}^{\prime },m_{-i}^{t})-u_{i}(m_{i}^{t},m_{-i}^{t})\right] , 
\end{equation*}%
where $u_{i}(m_{i}^{\prime },m_{-i}^{t})$ is the hypothetical payoff player $%
i$ would have received had they played $m_{i}^{\prime }$ in period $t$,
while holding $m_{-i}^{t}$ fixed.

\subsubsection{Strategy Update Rule}

Based on the calculated regrets $\{R_{i}^{T}(m_{i}^{\prime }):m_{i}^{\prime
}\in M_{i}\}$, the player updates their probability distribution over
messages. A common rule is to increase the probability of messages with
positive regret. For example: 
\begin{equation*}
p_{i}^{T+1}(m_{i}^{\prime })=\frac{\max \{R_{i}^{T}(m_{i}^{\prime }),0\}}{%
\sum_{m_{i}^{\prime \prime }\in M_{i}}\max \{R_{i}^{T}(m_{i}^{\prime \prime
}),0\}}, 
\end{equation*}%
where $p_{i}^{T+1}(m_{i}^{\prime })$ denotes the probability of choosing
message $m_{i}^{\prime }$ in the next period.

If all regrets are non-positive, the player may revert to a uniform or prior
distribution over messages.

\subsubsection{Regret Minimization Objective}

The player's objective is to minimize their average regret over time,
defined as: 
\begin{equation*}
\overline{R}_{i}^{T}=\max_{m_{i}^{\prime }\in M_{i}}R_{i}^{T}(m_{i}^{\prime
}). 
\end{equation*}%
A regret-minimizing algorithm ensures that $\overline{R}_{i}^{T}\rightarrow
0 $ as $T\rightarrow \infty $, implying no significant regret for not having
played any single message.

\subsection{Simulation Details and Results}

For ease of comparison across mechanisms, in the following simulations, the
state of the world is set to $(L,H)$ and then two agents (the buyer
and the seller) play the game induced by the mechanism detailed above using
regret matching. The test allocations are chosen as detailed in Section \ref%
{SIMALLOC}. The initial strategies involve randomizing uniformly over the
available messages. In about $200$ iterations, the dynamics are seen to
have converged to the correlated equilibrium of the game, i.e. truthtelling, and
no self-challenges. A sample simulation result is graphically represented in Figure \ref{fig:simresce}. 

As seen in Figure \ref{fig:simresce}, the players gradually converge to
playing the correlated equilibrium strategy with large probabilities despite
only following the regret minimizing heuristic. Since the mechanism
implements in correlated equilibria, this also yields the socially desirable
outcome with a large probability. The panels in the bottom half of the
figure show the evolution of the gains from trade with the growth of
transfers induced by the mechanism, i.e. any transfers other than those
required by the SCF itself.

\subsection{Comparisons with other mechanisms}

In \cite{HartMasColell2003}, the authors note that ``It is notoriously
difficult to formulate sensible adaptive dynamics that guarantee convergence
to Nash equilibrium''. \cite{MAM} provides a mechanism which can be used to
implement the SCF studied in this section in Nash equilibrium. We implement this mechanism (which we call the MAM mechanism) alongside regret dynamics using the same transfer scaling as the CE mechanism. The results are shown in Figure \ref{fig:simresmam}.

The mechanism takes longer (typically 2500 iterations or more) to converge to the correct actions, and incurs significantly more transfers in the process. Further, since it takes longer to find the optimal strategies, the social surplus is also significantly lower.

Allowing the dictator lotteries within the mechanism in \cite{MAM} to always be on with
probability $\epsilon$ provides an alternate mechanism in the spirit of \cite%
{AM92} which allows their mechanism to implement in correlated equilibria as
well, although the implementation is virtual. We simulate this mechanism (which we call the AM92 mechanism)
as well and present the results in Figure \ref{fig:simresam}.
This mechanism takes even longer to implement (of the order of 5000 iterations), and owing to the fact that the implementation is virtual, the dictator lotteries yield transfers with a small probability throughout the simulation, so that the transfers can be seen to dominate the social surplus. We conjecture that convergence
is likely to be even slower with the actual mechanism in \cite{AM92} for a
large number of rounds (K) since it would involve a larger message space.
Further, the message space of the mechanism in \cite{AM92} depends on the
number of rounds ($K$) which depends upon the degree of virtualness, $%
\epsilon$, whereas our mechanism remains the same regardless of the choice
of $\epsilon$.
\begin{figure}[!ht]
\centering
\includegraphics[scale = 0.6]{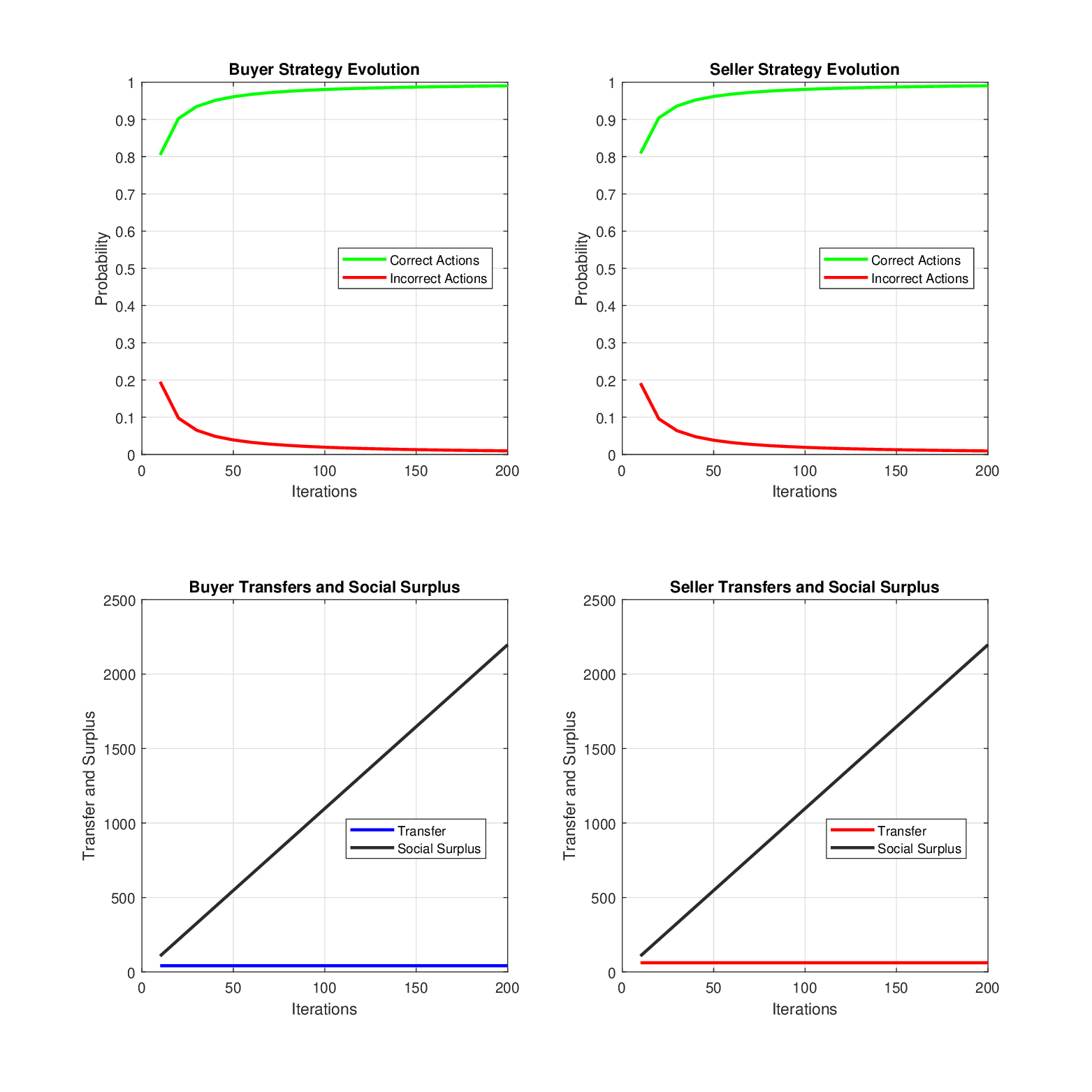}
\caption{Simulation Results (CE Mechanism)}
\label{fig:simresce}
\end{figure}
\begin{figure}[!ht]
\centering
\includegraphics[scale = 0.6]{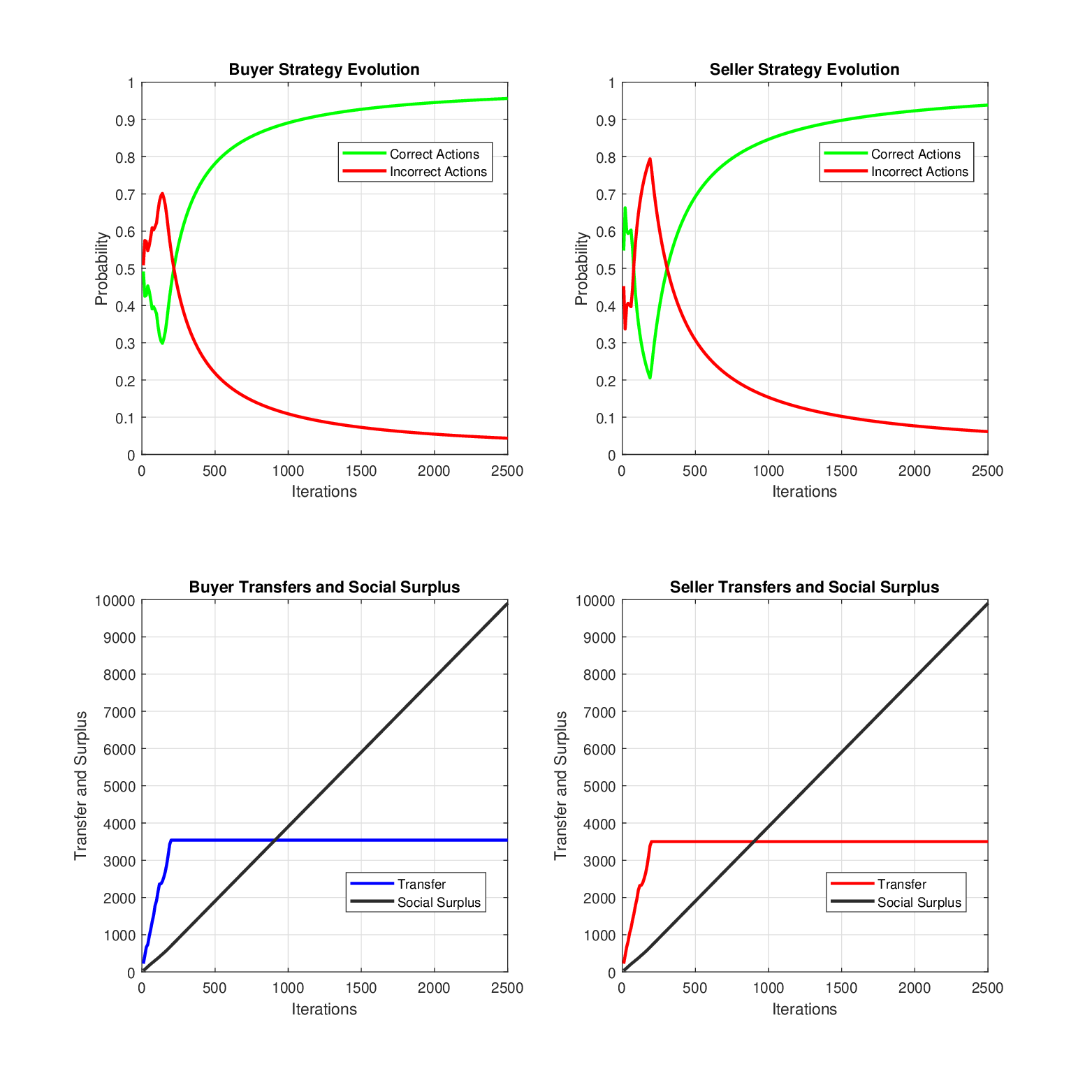}
\caption{Simulation Results (MAM Mechanism)}
\label{fig:simresmam}
\end{figure}
\begin{figure}[!ht]
\centering
\includegraphics[scale = 0.6]{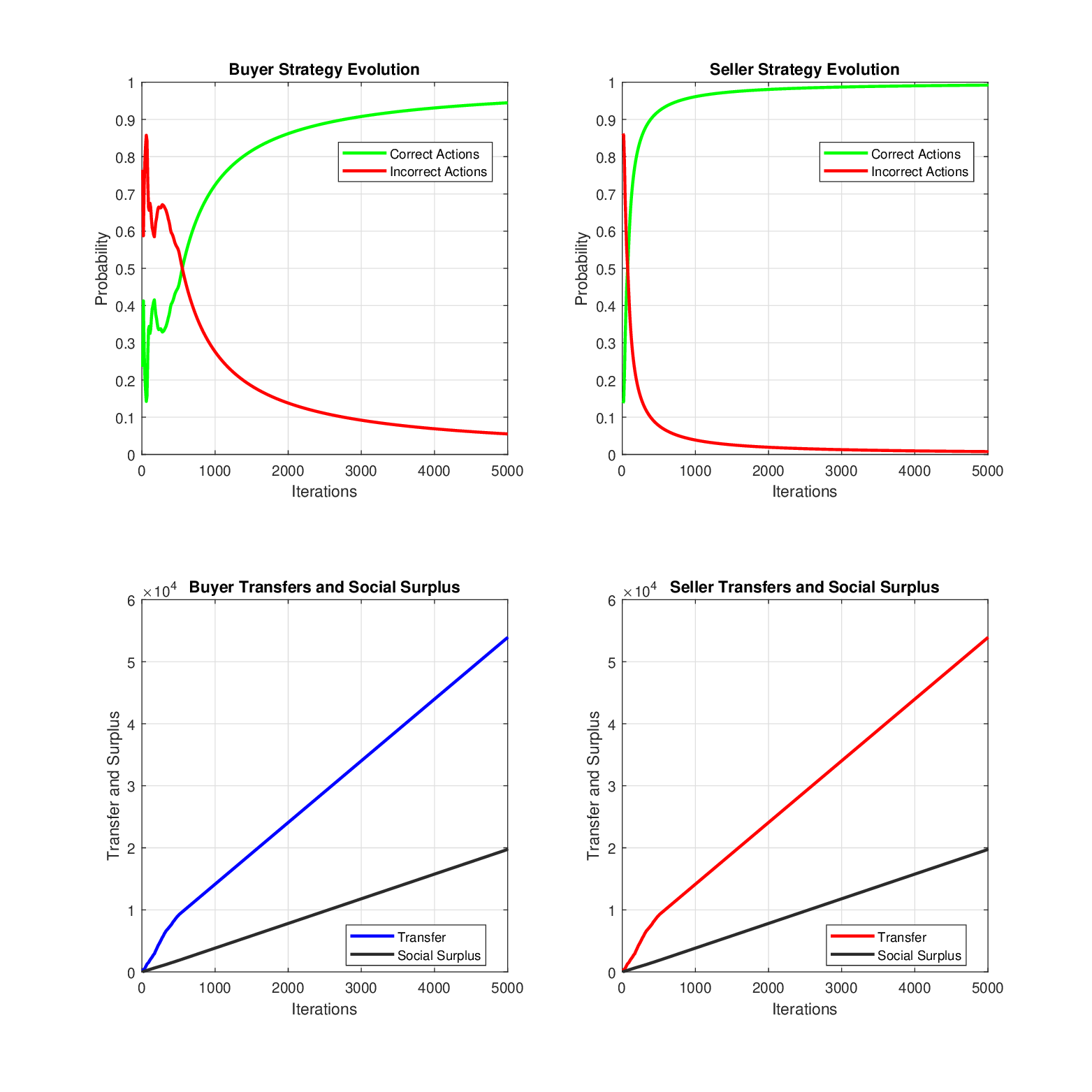}
\caption{Simulation Results (AM92 Mechanism)}
\label{fig:simresam}
\end{figure}
\bibliographystyle{ACM-Reference-Format}
\bibliography{sample-bibliography}

\section{Appendix}

The proof of Lemma\ref{BC}.

\begin{proof}
\label{proofofBC} Consider a challenge scheme $\bar{x}(\cdot ,\cdot )$.
First, we show that we can modify $\bar{x}(\cdot ,\cdot )$ into a new
challenge scheme $x(\cdot ,\cdot )$ such that 
\begin{equation}
x(\tilde{\theta},\theta _{i})\neq f(\tilde{\theta})\text{ and }x(\tilde{%
\theta},\theta _{i}^{\prime })\neq f(\tilde{\theta})\Rightarrow u_{i}(x(%
\tilde{\theta},\theta _{i}),\theta _{i})\geq u_{i}(x(\tilde{\theta},\theta
_{i}^{\prime }),\theta _{i})\text{.}  \label{b1}
\end{equation}%
To construct $x(\cdot ,\cdot )$, for each player $i$, we distinguish two
cases: (a) if $\bar{x}(\tilde{\theta},\theta _{i})=f(\tilde{\theta})$ for
all $\theta _{i}\in \Theta _{i}$, then set $x(\tilde{\theta},\theta _{i})=%
\bar{x}(\tilde{\theta},\theta _{i})=f(\tilde{\theta})$; (b) if $\bar{x}(%
\tilde{\theta},\theta _{i})\neq f(\tilde{\theta})$ for some $\theta _{i}\in
\Theta _{i}$, then define $x(\tilde{\theta},\theta _{i})\ $as the most
preferred allocation of type $\theta _{i}$ in the finite set 
\begin{equation*}
X(\tilde{\theta})=\left\{ \bar{x}(\tilde{\theta},\theta _{i}^{\prime
}):\theta _{i}^{\prime }\in \Theta _{i}\text{ and }\bar{x}(\tilde{\theta}%
,\theta _{i}^{\prime })\neq f(\tilde{\theta})\right\} . 
\end{equation*}%
Since $\bar{x}(\tilde{\theta},\theta _{i}^{\prime })\in \mathcal{L}_{i}(f((%
\tilde{\theta}),\tilde{\theta}_{i})$, we have $u_{i}(x(\tilde{\theta},\theta
_{i}),\tilde{\theta}_{i})\leq u_{i}(f(\tilde{\theta}),\tilde{\theta}_{i})$;
moreover, since $x(\tilde{\theta},\theta _{i})\ $as the most preferred
allocation of type $\theta _{i}$ in $X(\tilde{\theta})$ and $\bar{x}(\tilde{%
\theta},\theta _{i})\in \mathcal{SU}_{i}(f(\tilde{\theta}),\theta _{i})$, it
follows that $u_{i}(x(\tilde{\theta},\theta _{i}),\theta _{i})>u_{i}(f( 
\tilde{\theta}) ,\theta _{i})$. In other words, $x(\cdot ,\cdot )$ remains a
challenge scheme. Moreover, $x(\cdot ,\cdot )$ satisfies (\ref{b1}) by
construction.

Next, for each state $\tilde{\theta}$ and type $\theta _{i}$, we show that $%
x(\cdot ,\cdot )$ satisfies (\ref{b}). We proceed by considering the
following two cases. First, suppose that $x(\tilde{\theta},\theta _{i})\neq
f(\tilde{\theta})$. Then, by (\ref{b1}), it suffices to consider type $%
\theta _{i}^{\prime }$ with $x(\tilde{\theta},\theta _{i}^{\prime })=f(%
\tilde{\theta})$. Since $x(\tilde{\theta},\theta _{i}^{\prime })=f(\tilde{%
\theta})$ and $x(\tilde{\theta},\theta _{i})\neq f(\tilde{\theta})$, then it
follows from $x(\tilde{\theta},\theta _{i})\in \mathcal{SU}_{i}(f(\tilde{%
\theta}),\theta _{i})$ that $u_{i}(x(\tilde{\theta},\theta _{i}),\theta
_{i})>u_{i}(x(\tilde{\theta},\theta _{i}^{\prime }),\theta _{i})$. Hence, (%
\ref{best-C}) holds. Second, suppose that $x(\tilde{\theta},\theta _{i})=f(%
\tilde{\theta})$. Then, it suffices to consider type $\theta _{i}^{\prime }$
with $x(\tilde{\theta},\theta _{i}^{\prime })\neq f(\tilde{\theta})$. Since $%
x(\tilde{\theta},\theta _{i})=f(\tilde{\theta})$, we have $\mathcal{L}_{i}(f(%
\tilde{\theta}),\tilde{\theta}_{i})\cap \mathcal{SU}_{i}(f(\tilde{\theta}%
),\theta _{i})=\varnothing $. Moreover, $x(\tilde{\theta},\theta
_{i}^{\prime })\neq f(\tilde{\theta})$ implies that $x(\tilde{\theta},\theta
_{i}^{\prime })\in \mathcal{L}_{i}(f(\tilde{\theta}),\tilde{\theta}_{i})$.
Hence, we must have $x(\tilde{\theta},\theta _{i}^{\prime })\notin $ $%
\mathcal{SU}_{i}(f(\tilde{\theta}),\theta _{i})$. That is, $u_{i}(x(\tilde{%
\theta},\theta _{i}),\theta _{i})\geq u_{i}(x(\tilde{\theta},\theta
_{i}^{\prime }),\theta _{i}),$ i.e., (\ref{best-C}) holds.
\end{proof}

\end{document}